\pdfoutput=1

\documentclass[11pt]{article}

\usepackage{amsmath,amssymb,amsthm}
\usepackage[margin=1.1in]{geometry}
\usepackage{graphicx}
\usepackage{booktabs}
\usepackage{array}
\usepackage{tabularx}
\usepackage{rotating}
\usepackage{xcolor}
\definecolor{linkblue}{RGB}{20,60,150}
\usepackage[square]{natbib}
\usepackage[colorlinks=true,linkcolor=linkblue,citecolor=linkblue,urlcolor=linkblue]{hyperref}

\theoremstyle{plain}
\newtheorem{theorem}{Theorem}[section]
\newtheorem{proposition}[theorem]{Proposition}

\newtheorem*{lemmafiveA}{Lemma 5.2A}
\theoremstyle{definition}
\newtheorem{definition}[theorem]{Definition}
\newtheorem{example}[theorem]{Example}
\newtheorem*{definitionstar}{Definition}

\newcommand{\cU}{\mathcal{U}}
\newcommand{\Msem}{M_{\mathrm{sem}}}
\newcommand{\Fsem}{F_{\mathrm{sem}}}
\newcommand{\Rrep}{R_{\mathrm{rep}}}
\newcommand{\Psem}{P_{\mathrm{sem}}}
\newcommand{\PE}{P_{E}}
\newcommand{\QE}{Q_{E}}
\newcommand{\ObsE}{O_{E}}
\newcommand{\IOb}{I_{\mathrm{Ob}}^{E}}
\newcommand{\rsem}{r_{\mathrm{sem}}}
\newcommand{\rE}{r_{E}}
\newcommand{\Ftwo}{\mathbb{F}_{2}}
\newcommand{\Zint}{\mathbb{Z}}
\newcommand{\At}{\mathrm{At}}
\newcommand{\ArchCtx}{\mathrm{ArchCtx}}
\newcommand{\code}[1]{\texttt{#1}}

\title{\textbf{SAGA: A Comparison Theorem for Local-to-Global Software Architecture}\\[0.6em]
\large\itshape From Semantic Repair Cohomology to Algebraic-Geometric Descent}
\author{Hiroyuki Nakahata\\
\normalsize Independent Researcher\\
\normalsize ORCID: \href{https://orcid.org/0009-0008-5928-0234}{0009-0008-5928-0234}}
\date{\normalsize Version 1.0.1 --- August 2026\\
\normalsize DOI: \href{https://doi.org/10.5281/zenodo.21603761}{10.5281/zenodo.21603761}
--- Licensed under \href{https://creativecommons.org/licenses/by/4.0/}{CC BY 4.0}}

\begin{document}

\maketitle

\begin{abstract}
In a software architecture, each individual service can obey its own
conventions and each handoff between adjacent services can hold, and yet a
semantic inconsistency may remain that appears only on a full traversal
of the system. This paper independently constructs two cohomologies
that measure this gap between local correctness and global correctness, and
proves that they agree. The first construction speaks the language of
\emph{repair}: from the semantic repair options admitted in each local
context and their equivalence relation, it generates the coefficient
$\Msem$. The second speaks the language of \emph{equations}: it organizes
the constraints of the architecture as a simultaneous equation system and
generates the quotient coefficient $\QE$ by its obstruction ideal. Over a
selected finite cover $\cU$ in Algebraic Architecture Theory (AAT) --- the
theory that constructs software architecture as algebraic geometry --- and
under finitely many selection conditions matching the local data, the
comparison map induces an isomorphism
$H^1_{\mathrm{sem}}(\cU)\cong\check{H}^1(\cU,\QE)$ together with a
correspondence of residual classes. We call this the SAGA comparison
theorem. The obstruction measured in the language of repair and the
obstruction measured in the language of equations are the same cohomology
class, so semantic diagnosis and geometric computation translate into each
other in both directions. Moreover, when the family of repair states
satisfies the sheaf condition, the existence of a global repair is
equivalent to the vanishing of the obstruction class on both sides.

The paper presents this result in three layers: the mathematical proof of
the comparison theorem (Sections~3--5); the Lean formalization status at
release time (Section~6); and a diagnosis in which the measurement tool
ArchSig, on a real open-source microservice system, reproducibly walks the
full circle from measuring a nonzero obstruction to its disappearance after
repair (Section~7). The three layers refer to the same release identity,
and each claim is connected to primary evidence. This yields, for
whole-loop inconsistencies that per-service verification cannot capture, a
connected diagnostic path of proved mathematics, machine-checked status,
and reproducible measurement.
\end{abstract}

\section{Introduction}\label{sec:1}

\subsection{The local-to-global architecture problem}\label{sec:11}

Each constituent of a software architecture is designed to be consistent
within its own local context. A service keeps its own data conventions, and
each individual handoff between modules satisfies its own contract.
Nevertheless, the system as a whole can exhibit semantic inconsistency. The
heart of this phenomenon is that the sum of local correctness does not
constitute global correctness.

This structure has the same shape as a classical problem type in
mathematics. Given a family of local data, where each local piece is
consistent and adjacent pieces are mutually consistent, does a global datum exist
that glues the whole together? Sheaf theory formalizes this question as the
sheaf condition, and cohomology measures the obstruction to gluing as a
class in $H^1$. This paper realizes this problem type on software
architecture and proves, as a comparison theorem, that the obstruction to
semantic repair and the obstruction of the geometry generated from the
architecture's simultaneous equation system coincide as the same
cohomology class.

\subsection{The SAGA research problem}\label{sec:12}

The research problem of SAGA (S\'emantique Architecturale, G\'eom\'etrie
Alg\'ebrique) is to set up the construction in the language of
\emph{repair} and the construction in the language of \emph{equations and
geometry} independently, and then to prove a comparison theorem connecting
their $H^1$ groups and residual classes (the two constructions are given in
Section~\ref{sec:4}). Once the comparison holds, semantic diagnoses
translate into geometric computations, and geometric computations translate
into semantic readings, in both directions.

Throughout this paper, the cohomology under discussion is the additive
\v{C}ech $H^1(\cU,-)$ relative to a selected monomorphic AAT cover $\cU$.
Identification with cover-independent sheaf cohomology is not among the
claims of this paper (\S\ref{sec:57}).

A remark on the name. The SAGA of this paper is unrelated to the Saga
pattern of compensating sequences known from distributed transactions
\citep{garciamolina1987sagas}. The name is an homage to Serre's GAGA
comparison theorem \citep{serre1956gaga}, which established the
correspondence between two geometries (\S\ref{sec:92}).

\subsection{The three layers of the result and the organization of the paper}\label{sec:13}

This paper presents the result in three layers. The mathematical layer is
the principal contribution of the paper: we construct the comparison
between semantic repair cohomology and equation-generated AAT \v{C}ech
cohomology, and prove the $H^1$ isomorphism, the residual class
correspondence, and the equivalence with global repair under the true
sheaf condition (Section~\ref{sec:5}). The Lean layer is the status at
release time: it records the formalized definitions, theorems, witnesses,
and proof chain at declaration granularity, fixing the correspondence with
the theorems of this paper and the axiom status (Section~\ref{sec:6}). The
ArchSig layer is a finite case study: ArchSig is a Rust command-line tool
that computes diagnoses from observations of the code and a selected
equation system; on a fixed microservice architecture, it shows in a
reproducible form how the mathematical objects are computed from finite
architecture evidence and how they change before and after repair
(Section~\ref{sec:7}).

In this paper, the \textbf{release identity} is the triple consisting of
the release tag of the Lean sources, the version of ArchSig, and the
digests of the input artifacts. The Lean sources and the ArchSig sources
are available in the public repository
\begin{center}
\href{https://github.com/iroha1203/AlgebraicArchitectureTheoryV2}{\nolinkurl{github.com/iroha1203/AlgebraicArchitectureTheoryV2}}.
\end{center}
The formalization status of Section~\ref{sec:6} and the reproduction
procedure of Section~\ref{sec:7} are stated against this fixed identifier.

The rest of the paper is organized as follows. Section~\ref{sec:2} presents
the AAT approach; Section~\ref{sec:3} the mathematical foundations needed
for SAGA; Section~\ref{sec:4} the independent construction of the two
complexes; Section~\ref{sec:5} the comparison theorem; Section~\ref{sec:6}
the Lean status; Section~\ref{sec:7} the ArchSig diagnosis of real code;
Section~\ref{sec:8} related work; Section~\ref{sec:9} the research
outlook; and Section~\ref{sec:10} the conclusion.

\section{The AAT Approach}\label{sec:2}

Algebraic Architecture Theory (AAT) is a mathematical theory that
constructs software architecture as algebraic geometry. This section
briefly describes what is distinctive about the approach and the skeleton
of the construction. All definitions of the mathematical objects needed
for the statement and proof of SAGA are given in Section~\ref{sec:3}.

The construction proceeds in three stages. First, AAT axiomatizes
primitive architectural facts as \textbf{Atoms}. An Atom does not belong
to any particular programming language, framework, or serialization
format, so facts spanning multiple languages, services, and storage
representations can be treated on the same architecture object. To speak
about inconsistencies where the implementation types agree but the
semantic conventions differ, the architecture must be constructed at a
level of abstraction above the type system, and this language independence
is the prerequisite for the subject of this paper. Second, multiple
architecture conditions are organized as an \textbf{Atom-indexed
architectural equation system} $E$. From $E$ one constructs residuals,
the obstruction ideal, and the locus of points where the required
equations hold simultaneously; the place where the constraints are
satisfied becomes a geometric object, the analogue of the zero set of a
system of equations. Third, local contexts, covers, restrictions, and
overlaps are organized as an \textbf{AAT site}. The local-to-global
problem thereby becomes expressible in the language of sheaves and
cohomology: whether a family of local conditions glues globally is a
question about the existence of a global section, and the obstruction to
gluing is a class in \v{C}ech cohomology.

Under this construction, architectural failures are structured as
residuals, ideals, and cohomology classes, and diagnosis and repair can be
read as changes of the same geometric objects. A repair is an operation
that kills a class, and the success of a repair is judged by the vanishing
of the class. Furthermore, ringed geometry over an architecture opens
connections to the tools of algebraic geometry --- schemes, derived
intersections, deformations, monodromy --- but what SAGA uses in this
paper is \v{C}ech $H^1$ and descent; the deeper tools are discussed in the
research outlook of Section~\ref{sec:9}. The algebraic-geometric character
of SAGA lies on the side of the construction: the observable ring
$\ObsE$, the witness ideals and the obstruction ideal, the quotient
coefficient $\QE=\ObsE/\IOb$, the AAT site, and the descent
interpretation that reads the zero class as a global repair
(Sections~\ref{sec:3}--\ref{sec:5}; the attribution of what is proper to
SAGA is in \S\ref{sec:51}).

The geometry of AAT is a \textbf{relative geometry} with the vocabulary,
the equation system, the coverage, and the coefficients held fixed. Every
claim is relative to the selected input data, and the provenance from
inputs to conclusions can be traced. This relativity is a constraint, but
it is also the source of the discipline that matches claims with
evidence; the measurement of Section~\ref{sec:7} is its executable form.

SAGA concretizes the local-to-global capability of this approach by
comparing the semantic repair obstruction with the equation-generated
\v{C}ech obstruction. The approach of this section is summarized in one
sentence:

\begin{quote}
AAT constructs software architecture as a relative algebraic geometry
generated from primitive architectural facts and simultaneous
architectural equations, so that local consistency, global obstruction,
and repair become geometric objects.
\end{quote}

\section{AAT Foundations for SAGA}\label{sec:3}

This section defines, in a form self-contained within the paper, the
mathematical objects needed to read the statement and proof of SAGA. The
arguments of the following sections presuppose only the definitions of
this section.

\subsection{Atoms and Atom families}\label{sec:31}

\begin{definition}[Atom]\label{def:atom}
An \emph{Atom} is a typed fact of the form
\begin{verbatim}
a = (kind, axis, subject, predicate, payload)
\end{verbatim}
where \code{kind} is the sort of the Atom (\code{component},
\code{relation}, \code{state}, \code{contract}, \code{semantic}, and so
on), \code{axis} is the signature / equation axis under which the Atom is
read, \code{subject} is the subject of the statement, \code{predicate} is
a single atomic statement made about the \code{subject}, and
\code{payload} is the content data of the statement. An Atom is a
primitive architectural fact belonging to no particular language or
framework, and inside AAT it is treated as a generator. The schema of the
core family includes \code{component(c)}, \code{relation\_r(c,d)},
\code{state(c,x)}, \code{contract(m,p)}, \code{semantic(t,s)},
\code{runtime\_interaction(u,v,h)}, among others. For instance,
\code{semantic(t,s)} is the semantic Atom whose subject is the
representation \code{t}, whose predicate is ``\code{t} obeys the semantic
convention \code{s}'', and whose payload is the description of \code{s}.
\end{definition}

We give one correspondence with real code. Suppose the service that
issues orders assembles an order number as
\begin{verbatim}
orderId = "ORD-" + zeroPad(seq.next(), 8)
\end{verbatim}
Then from this single site the following Atoms can be read:
\begin{verbatim}
component(orders)        -- the service this code belongs to
state(orders, orderId)   -- the local state "issued order numbers"
semantic(orderId, s)     -- s = the notation convention
                         --     "prefix ORD- + 8-digit zero-padded sequence"
\end{verbatim}
Observing the implementation type (\code{string}) alone does not yield
the last semantic Atom; reading the expression and how it is used is
required. Where Atoms come from --- extraction from real code --- belongs
to the observation process outside AAT, and its reproducible fixing is
described in Appendix~\ref{app:C2}. Inside AAT, Atoms are the given generators.

\begin{definition}[Atom family and support]\label{def:atomfamily}
Write $\At$ for the Atom universe. An \emph{Atom family} $F$ is a subset
of $\At$, and $\operatorname{support}(F)$ is the set of subjects occurring
in $F$. The family $F$ may contain component, relation, state, contract,
effect, and semantic Atoms in combination. This mixing is not excluded; the
question is which equations the mixed primitive facts satisfy and which
obstructions they generate.
\end{definition}

\subsection{Architecture objects}\label{sec:32}

\begin{definition}[architecture object]\label{def:archobject}
An \emph{architecture object} is a tuple
\[
A = (C, S, Q)
\]
obtained by equipping an Atom configuration $C=(F,\mathit{Rel},\mathit{Pl})$
(an Atom family $F$, relational data $\mathit{Rel}$, and placement
$\mathit{Pl}$) with structure maps $S$ (structural maps to graphs,
categories, algebras, state transitions, and so on) and selected
quantities $Q$ (invariants, measures, signature axes). We write
$F \Rightarrow A$ when $A$ is obtained from the Atom family $F$. Every
architecture object is generated from an Atom family (the
\emph{Atom-origin} principle). In this paper the fixed architecture
object is written $X$.
\end{definition}

\subsection{Architecture contexts and the context category}\label{sec:33}

\begin{definition}[architecture context]\label{def:archcontext}
An \emph{architecture context} $W$ for an architecture object is a local
reading of it. In the minimal model it is read as
\[
W = (\operatorname{Supp}(W), \operatorname{Ax}(W), \operatorname{Obs}(W)),
\]
where $\operatorname{Supp}(W)$ is the Atom support read in $W$,
$\operatorname{Ax}(W)$ is the selected signature / equation axis, and
$\operatorname{Obs}(W)$ is the family of coordinates, witnesses, and
semantic data readable in $W$. Component-local views, feature views,
semantic contract slices, and runtime trace slices are representative
contexts.
\end{definition}

\begin{definition}[context category]\label{def:contextcategory}
Write $\ArchCtx(X)$ for the category whose objects are contexts and whose
morphisms are reinterpretations $g:W'\to W$ (restriction, projection,
refinement, embedding, base change). Morphisms point from local contexts
to global contexts. The overlap of a cover is constructed as the pullback
\[
U_i\times_W U_j
\]
(the context overlap).
\end{definition}

\paragraph{Minimal poset model.}
When supports, axes, and observable families have finite meets, and under
the equivalence quotient in which there is at most one reinterpretation
morphism between any two contexts, $\ArchCtx_{\min}(X)$ becomes a
finite-meet poset category, and overlaps are computed as meets. In this
model every morphism is a monomorphism (\S\ref{sec:37} connects this fact
to the cover assumption). The finite computations of Section~\ref{sec:7}
are the executable form of this regime; the range of applicability of the
theorem over general $\ArchCtx(X)$ and the status of the formalization
are summarized in \S\ref{sec:57}.

\subsection{Atom-indexed architectural equation systems}\label{sec:34}

\begin{definition}[equation system]\label{def:equationsystem}
Fix the Atom universe $\At$, a family of architecture objects, and a
small category of local contexts. An \emph{Atom-indexed architectural
equation system} $E$ consists of the following data:
\begin{itemize}
\item $K_E$: the type of equation indices;
\item $\mathrm{role}_E : K_E \to \{\mathrm{required}, \mathrm{optional},
      \mathrm{derived}\}$: the role of each equation;
\item $\ObsE : \ArchCtx(X)^{\mathrm{op}} \to \mathbf{CommRing}$: a
      presheaf assigning to each context $W$ an observable ring
      $\ObsE(W)$;
\item $\nu_{W,i,a} \in \ObsE(W)$: the symbolic violation coordinate for
      $i \in K_E$ and $a \in \At$;
\item $\epsilon_{W,A,i,a} \in \ObsE(W)$: the object-dependent equation
      residual, evaluating equation $i$ on the architecture object $A$.
\end{itemize}
\end{definition}

The two coordinate families commute with restriction. The $\nu$ are the
symbolic coordinates generating the witness ideals, and the $\epsilon$
are the residual coordinates judging equation fulfillment. An equation
index $i$ \emph{holds} on $A$ when $\epsilon_{W,A,i,a}=0$ for all
contexts and Atoms. The simultaneous fulfillment of the required
equations is the basic condition for the consistency of the architecture.

Natural-language expressions of design principles and conventions are
derived from $E$ as readings of equation indices and their coordinate
families; they carry no independent truth predicate. The primary
mathematical object is $E$.

\subsection{AAT sites, presheaves, and the sheaf condition}\label{sec:35}

\begin{definition}[AAT site]\label{def:aatsite}
Fix an architecture object $X$, an equation system $E$, a signature
$\mathit{Sig}$ (the family of selected signature axes; in this paper it
is treated as part of the fixed package and not expanded further),
coverage requirements $R$, and a context-overlap package $\mathit{Ov}$.
A coverage family $\{W_i\to W\}$ is \emph{$(E,R,\mathit{Ov})$-admissible}
when all the data that $R$ requires to be readable on $W$ --- the support
of the target Atoms, the support of the coordinates of $E$ ($\nu$,
$\epsilon$), the selected witnesses and signature axes, and the
interaction overlaps between contexts --- are readable across the cover
through restriction to the $\{W_i\}$. Write $J_{E,R,\mathit{Ov}}$ for the
Grothendieck topology generated by the admissible families. The
\emph{AAT site} is the package
\[
\mathrm{Site}_{\mathrm{AAT}}(X,E,\mathit{Sig},R,\mathit{Ov})
=(\ArchCtx(X), E, \mathit{Sig}, R, \mathit{Ov}, J_{E,R,\mathit{Ov}}).
\]
In this paper we abbreviate the underlying site with this package fixed
as $S_X$, and write $W$ for its contexts. Covers do not generate Atoms
(non-generation). What the proofs of this paper consume from this
topology is only that $\cU$ belongs to the topology and the sheaf
condition (Theorem~\ref{thm:gluing}); the internal structure of
admissibility is used in no proof. The regime of finite execution is the
finite-meet poset model of \S\ref{sec:33}.
\end{definition}

\begin{definition}[presheaf and sheaf condition]\label{def:presheaf}\sloppy
A presheaf on $S_X$ is a functor $F:\ArchCtx(X)^{\mathrm{op}}\to
\mathbf{Set}$. $F$ is a \emph{sheaf} when, for every cover
$\{W_i\to W\}$, compatible local sections agreeing on overlaps glue to a
unique global section. The true semantic repair sheaf condition of
Section~\ref{sec:5} is an instance of this sheaf condition.
\end{definition}

\subsection{Witness ideals, the obstruction ideal, and the equation-generated coefficient}\label{sec:36}

\begin{definition}[witness ideal and obstruction ideal]\label{def:witnessideal}
The local witness ideal of an equation index $i$ is the ideal generated
by the symbolic violation coordinates,
\[
I_i^E(W)=\langle \nu_{W,i,a}\mid a\in \At\rangle\subset \ObsE(W).
\]
The sum of the witness ideals of the required equations,
\[
\IOb(W)=\sum_{i\in K_E,\ \mathrm{role}_E(i)=\mathrm{required}} I_i^E(W),
\]
is called the \emph{local obstruction ideal}. From the restriction
compatibility of $\nu$ it follows that, for every context morphism
$g:W'\to W$, $\mathrm{res}_g(\IOb(W))\subset \IOb(W')$, so the
context-wise ideals form an ideal subpresheaf.
\end{definition}

\begin{theorem}[generated obstruction quotient]\label{thm:quotient}
Fix an equation system $E$ and a displayed equation source (a choice, for
finitely many local contexts $W_q\to W_{\mathrm{base}}$, of an
architecture object $A_q$, a required equation index $i_q$, and a support Atom
$a_q$; in the cover-indexed case the local contexts are taken at the
charts). Then the following hold.
\begin{enumerate}
\item \textbf{generated coefficient}: the quotient
\[
\QE(W)=\ObsE(W)/\IOb(W)
\]
carries a restriction induced by the universal property of quotients,
and $W\mapsto \QE(W)$ is an abelian-group-valued presheaf.
\item \textbf{generated interpretation}: the interpretation of a
displayed source $q$ is constructed as the quotient class of the
residual, $\mathrm{interpret}(q)=[d_q]\in \QE(W_q)$ with
$d_q:=\epsilon_{W_q,A_q,i_q,a_q}$. The interpretation is not free data;
it is constructed.
\item \textbf{residual restriction naturality}: for a context morphism
$g:Z\to W_q$, setting $d_{q|Z}:=\epsilon_{Z,A_q,i_q,a_q}$,
\[
\mathrm{res}_g([d_q])=[\mathrm{res}_g(d_q)]=[d_{q|Z}].
\]
The restriction of a residual class is the class of the residual
evaluated with the same architecture reading, equation index, and
support Atom.
\item \textbf{vanishing}: if the displayed equations are fulfilled, then
$[d_q]=0$ for every $q$.
\item \textbf{quotient zero criterion}: $[d_q]=0 \iff d_q\in\IOb(W_q)$.
Hence if $d_q\notin\IOb(W_q)$ then $[d_q]\neq0$.
\end{enumerate}
\end{theorem}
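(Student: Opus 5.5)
The plan is to prove the five items in order, since each uses only the universal property of quotient rings together with the two structural facts recorded in \S\ref{sec:34} and Definition~\ref{def:witnessideal}: the coordinates $\nu$ and $\epsilon$ commute with restriction, and $\mathrm{res}_g(\IOb(W))\subset\IOb(W')$ for every context morphism $g:W'\to W$.

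For item~1, fix $g:W'\to W$. The composite $\ObsE(W)\xrightarrow{\mathrm{res}_g}\ObsE(W')\to\QE(W')$ is a ring homomorphism. Its kernel contains $\IOb(W)$, because $\IOb(W)$ is carried into $\IOb(W')$. By the universal property of the quotient it therefore factors uniquely through a ring homomorphism $\overline{\mathrm{res}}_g:\QE(W)\to\QE(W')$.

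The functoriality identities $\overline{\mathrm{res}}_{\mathrm{id}}=\mathrm{id}$ and $\overline{\mathrm{res}}_{g\circ h}=\overline{\mathrm{res}}_h\circ\overline{\mathrm{res}}_g$ follow from the uniqueness clause of the universal property. Both sides of each identity agree after precomposition with the surjection $\ObsE(W)\to\QE(W)$, and on that level they agree because $\ObsE$ is a presheaf. Forgetting the multiplicative structure, $W\mapsto\QE(W)$ is then an abelian-group-valued presheaf.

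Item~2 is a definition rather than a claim. The class $[d_q]$ is well defined because $d_q=\epsilon_{W_q,A_q,i_q,a_q}\in\ObsE(W_q)$ by Definition~\ref{def:equationsystem}, and $[\,\cdot\,]$ is the quotient map. The remark that the interpretation is constructed, not free, just records that no further choice enters beyond the displayed source $q$.

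Item~3 is two equalities.
\begin{itemize}
\item The first, $\overline{\mathrm{res}}_g([d_q])=[\mathrm{res}_g(d_q)]$, is the defining commutative square of $\overline{\mathrm{res}}_g$ from item~1.
\item The second, $[\mathrm{res}_g(d_q)]=[d_{q|Z}]$, follows from the stipulation that the $\epsilon$-family commutes with restriction, i.e.\ $\mathrm{res}_g(\epsilon_{W_q,A,i,a})=\epsilon_{Z,A,i,a}$, applied with $A=A_q$, $i=i_q$, $a=a_q$.
\end{itemize}

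The main point needing care is this second equality. I would state explicitly that restriction compatibility of $\epsilon$ is being used with the architecture object, equation index and Atom held fixed, since this is exactly what "the same architecture reading" means in the statement.

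Item~5 is the standard fact that for an ideal $I\subset R$, the coset $r+I$ is zero in $R/I$ if and only if $r\in I$. The stated contrapositive is immediate from this.

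Item~4 then follows from item~5, or even more directly. "The displayed equations are fulfilled" means equation $i_q$ holds on $A_q$. By the definition in \S\ref{sec:34}, this gives $\epsilon_{W,A_q,i_q,a}=0$ for all contexts $W$ and Atoms $a$, in particular $d_q=0$. Since $0\in\IOb(W_q)$, item~5 gives $[d_q]=0$.

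I expect no step to be a genuine obstacle; the only place to be careful is the presheaf law in item~1. There I would avoid any ad hoc computation and rely on uniqueness in the universal property, so that the argument also works in the cover-indexed case, where the $W_q$ are the charts of $\cU$.
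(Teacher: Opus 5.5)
Your proposal is correct and matches the paper's proof item by item. The paper likewise gets (1) from the ideal-subpresheaf property plus the universal property of quotients, treats (2) as a construction, obtains (3) from restriction compatibility of $\epsilon$ and computation with representatives, reads (4) off equation fulfillment ($\epsilon=0$), and takes (5) as the definition of zero in a quotient; you simply spell out details it compresses, such as functoriality of the induced restrictions via the uniqueness clause.
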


\begin{proof}
By the general theory of quotients and ideals. (1) is the ideal
subpresheaf property of Definition~\ref{def:witnessideal} and the
universal property of quotients; (2) is a construction; (3) follows from
the restriction compatibility of $\epsilon$
(\S\ref{sec:34}) and computation with
representatives in the quotient; (4) is the definition of equation
fulfillment ($\epsilon=0$); (5) is the very definition of zero in a
quotient.
\end{proof}

What clause~5 provides is the zero test in the quotient. To pass from a
displayed failure to a nonzero class $[d_q]\neq0$, one separately needs
the semantic failure to materialize as a residual not belonging to the
ideal (\textbf{semantic faithfulness}). This is not a property of the
quotient coefficient but a condition belonging to the selection and
supply of the displayed source. Its bearer in actual measurement is
described in Section~\ref{sec:7} (\S\ref{sec:74}).

This $\QE$ is the coefficient of the geometric \v{C}ech complex of SAGA.
Intuitively, $\QE$ views the observables coarsely, exactly up to the
obstruction ideal: two observables have the same class precisely when
their difference is a finite $\ObsE$-combination of violation
coordinates. That the obstruction is ideal-theoretic --- that failure is
measured not as a label but as a class in a quotient --- is the
foundation of every construction from Section~\ref{sec:4} onward.

\subsection{Monomorphic AAT covers}\label{sec:37}

Fix a finite index set $I$ and an AAT cover
\[
\mathcal U=\{u_i:U_i\to W\}_{i\in I},
\]
and assume each $u_i$ is a monomorphism. Such a cover is called a
\textbf{monomorphic AAT cover}. Choose a total order on the indices and
write, for $i<j<k$,
\[
U_{ij}=U_i\times_W U_j,\qquad
U_{ijk}=U_i\times_W U_j\times_W U_k.
\]
The charts $U_i$ together with these pairwise / triple intersections are
called the \textbf{cover intersection diagram}. The three pairwise
intersections of a nonempty $U_{ijk}$ are nonempty, since they receive
projections from $U_{ijk}$. Empty pullbacks are excluded from the
objects of the intersection diagram, and their values are fixed by the
empty-overlap normalization (input~8 of Theorem~\ref{thm:central} in
Section~\ref{sec:5}). Finiteness is not needed for the comparison core
itself, but a finite cover is fixed so that the central theorem, the
finite witnesses, and the executable realization can be treated in the
same notation. In the finite-meet poset model of \S\ref{sec:33} every
morphism is a monomorphism, so the assumption is automatic; over general
$\ArchCtx(X)$ it is an explicit hypothesis of the theorem.

\subsection{Selected, generated, proved}\label{sec:38}

In SAGA, inputs, constructions, and consequences are strictly
distinguished.

\begin{center}
\small
\begin{tabular}{@{}p{0.17\linewidth}p{0.76\linewidth}@{}}
\toprule
Kind & Content \\
\midrule
selected &
the stage of the comparison (the monomorphic AAT cover, the
empty-overlap normalization); the primary data of the semantic side (the
semantic atom system, the local repair relation, the local repair
atlas); the primary data of the equation side (the equation system $E$,
the local equation-lift atlas); the correspondence between the two
(Atom/equation interpretation, the local-state interpretation $\beta$);
the two completeness conditions; the true sheaf condition used for
global gluing. Fixed as the cover, inputs 1--8, and the sheaf
hypothesis of stage (iii) of Theorem~\ref{thm:central} \\
\addlinespace
generated &
$\Msem$, $\QE$, the two \v{C}ech complexes, $\rsem$, $\rE$, the
coefficient map $\Phi$, the cochain map $\kappa$ \\
\addlinespace
proved &
repair-relation soundness, SAGA presentation exactness, the isomorphism
property of $\Phi$, commutation with the differentials, preservation of cocycles /
coboundaries, the $H^1$ isomorphism, the residual class correspondence,
the equivalence of the zero class with an actual global repair \\
\bottomrule
\end{tabular}
\end{center}

This distinction fixes the provenance of the conclusions while
preserving mathematical relativity: \emph{selected} is the input
contract, and \emph{proved} is the theorem under that contract.

\section{Semantic Repair and Equation-Generated Geometry}\label{sec:4}

This section constructs the two complexes of SAGA independently. The
semantic side is generated from semantic atoms and a repair relation, the
equation side from an equation system; each produces its coefficient,
complex, and residual from its own primary data alone. Neither
construction refers to the other, and the comparison of
Section~\ref{sec:5} --- the existence of the map, the fact that it is an
isomorphism, the correspondence of residual classes --- is the conclusion of a theorem,
not a repetition of the constructions. The single map connecting the two,
$\chi^E$ (Proposition~\ref{prop:chiE}), is a construction of a comparison
input for Section~\ref{sec:5} and is used in the construction of neither
complex.

\subsection{The cover-relative \v{C}ech complex}\label{sec:41}

Let $F$ be an abelian-group-valued presheaf on $S_X$. Define the
three-term cochain complex relative to $\cU$ by
\[
C^0(\mathcal U,F)=\prod_iF(U_i),\qquad
C^1(\mathcal U,F)=\prod_{i<j}F(U_{ij}),\qquad
C^2(\mathcal U,F)=\prod_{i<j<k}F(U_{ijk})
\]
(the products run over nonempty intersections), with the differentials on
ordered indices
\[
(\delta^0a)_{ij}=a_j|_{U_{ij}}-a_i|_{U_{ij}},
\qquad
(\delta^1c)_{ijk}=c_{jk}|_{U_{ijk}}-c_{ik}|_{U_{ijk}}+c_{ij}|_{U_{ijk}}.
\]
A direct computation gives $\delta^1\delta^0=0$, so the cover-relative
$H^1$ is defined by
\[
\check H^1(\mathcal U,F)=\ker\delta^1/\operatorname{im}\delta^0.
\]

\subsection{The semantic-side construction}\label{sec:42}

\paragraph{Semantic repair presentation.}
To each context $V$ of $S_X$, assign the set $\Lambda(V)$ of semantic
atoms distinguished on $V$, a projection $\pi_V:\Lambda(V)\to\At$ giving
the underlying Atom of each semantic atom, and a subset
$S(V)\subseteq\Lambda(V)$ of \textbf{supported semantic atoms} available
for repair. For each context morphism $V'\to V$ there is a functorial
restriction map $\Lambda(V)\to\Lambda(V')$ that preserves support (if
$\lambda\in S(V)$ then $\lambda|_{V'}\in S(V')$), and the projection
preserves the underlying Atom:
\[
\pi_{V'}(\lambda|_{V'})=\pi_V(\lambda).
\]
Write $\Fsem(V)$ for the free abelian group on $S(V)$; its elements
(formal finite sums of supported atoms) are called \textbf{repair
words}. On each $V$, select a restriction-stable subgroup
\[
\Rrep(V)\subset \Fsem(V)
\]
(the \textbf{local repair relation}). This is all of the primary data of
the semantic side. The comparison core
(\S\ref{sec:53}--\S\ref{sec:55}) uses these data restricted to the cover
intersection diagram.

\paragraph{Semantic coefficient.}
The quotient
\[
\Msem(V)=\Fsem(V)/\Rrep(V)
\]
carries a restriction induced by restriction-stability, and
$V\mapsto \Msem(V)$ is an abelian-group-valued presheaf on $S_X$. The
complex of \S\ref{sec:41} with coefficients $\Msem$,
$C^\bullet_{\mathrm{sem}}(\cU):=C^\bullet(\cU,\Msem)$, is the semantic
complex.

\paragraph{Affine semantic repair system.}
Let $\Psem$ be a presheaf of semantic local repair states on $S_X$. On
each $V$, $\Fsem(V)$ acts on $\Psem(V)$, restriction commutes with the
action, and on each $V$ of the cover intersection diagram the following
three conditions hold:
\begin{itemize}
\item \textbf{action soundness}: words in $\Rrep(V)$ act as the
identity;
\item \textbf{stabilizer completeness}: if $\Psem(V)$ is nonempty, every
word acting as the identity belongs to $\Rrep(V)$;
\item \textbf{local transitivity}: any two states are carried to each
other by the action of some word.
\end{itemize}
By soundness the action descends to an action of $\Msem(V)$; by
completeness this action is free; by transitivity it is transitive on
nonempty $\Psem(V)$. Therefore, on each $V$ of the cover intersection
diagram, a nonempty $\Psem(V)$ is an $\Msem(V)$-torsor (an affine
space). A torsor is a space without a distinguished origin in which,
instead, the difference of any two states is measured uniquely as an
element of the coefficient group. The construction of residuals below
uses only this difference, never the states themselves.

\paragraph{Semantic residual.}
Choosing a selected local repair atlas $\{p_i\}$, the torsor difference
on each nonempty overlap determines a unique
\[
r_{\mathrm{sem},ij}=p_j|_{U_{ij}}-p_i|_{U_{ij}}\in \Msem(U_{ij}),
\]
giving the \textbf{semantic residual} $\rsem\in C^1_{\mathrm{sem}}(\cU)$.
The residual $\rsem$ is a cocycle: on $U_{ijk}$, abbreviating
$r_{ij}=r_{\mathrm{sem},ij}$, we have
$p_k=r_{jk}+p_j=r_{jk}+r_{ij}+p_i$ and $p_k=r_{ik}+p_i$, and freeness of
the action gives $r_{jk}-r_{ik}+r_{ij}=0$.

The class $[\rsem]\in
H^1_{\mathrm{sem}}(\cU):=\check H^1(\cU,\Msem)$ does not depend on the
choice of atlas (\textbf{choice independence}): for another atlas
$p'_i=a_i+p_i$ the same torsor computation gives
$r'_{\mathrm{sem}}=\rsem+\delta^0a$. Consequently, the vanishing
$[\rsem]=0$ is equivalent to the existence of a correction
$a\in C^0_{\mathrm{sem}}(\cU)$ such that the corrected atlas
$(-a_i)+p_i$ agrees on all pairwise overlaps (the existence of a
\textbf{matching correction}). This paraphrase provides the intuition
for coboundaries: a coboundary $\delta^0a$ is the difference produced by
merely changing the reference (the atlas) within each chart. A zero
class is an inconsistency that can be removed by per-chart changes of
reference; a nonzero class is an inconsistency that no combination of
local reference changes removes, and that remains on a closed loop.

\subsection{The equation-side construction}\label{sec:43}

The equation side constructs the \textbf{geometric \v{C}ech complex}
$C^\bullet_E(\cU):=C^\bullet(\cU,\QE)$ with coefficients in the $\QE$ of
Theorem~\ref{thm:quotient}. The \textbf{equation-lift system} $\PE$ is a
presheaf on $S_X$ such that on each intersection $V$, the nonempty
$\PE(V)$ carries a free and transitive action of $\QE(V)$, and
restriction commutes with the action. From a selected local lift atlas
$\{e_i\}$, the difference on overlaps generates the geometric-side
residual
\[
r_{E,ij}=e_j|_{U_{ij}}-e_i|_{U_{ij}}.
\]
By the same torsor argument as in \S\ref{sec:42}, $\rE$ is a cocycle and
its class does not depend on the choice of atlas.

\begin{proposition}[equation semantic realization: construction of $\chi^E$]\label{prop:chiE}
To each supported semantic atom $\lambda\in S(V)$ on each cover
intersection $V$, assign a required equation index $i_\lambda$ and a
local architecture reading $A_\lambda$, compatibly with restriction
(selected; for a face restriction $V'\to V$ of the cover intersection
diagram, $i_{\lambda|_{V'}}=i_\lambda$ and
$A_{\lambda|_{V'}}=A_\lambda$). Then
\[
\chi^E_V(\lambda):=[\epsilon_{V,A_\lambda,i_\lambda,\pi_V(\lambda)}]\in \QE(V)
\]
is restriction-natural with respect to face restrictions, i.e.
\[
\chi^E_{V'}(\lambda|_{V'})=\chi^E_V(\lambda)|_{V'}.
\]
\end{proposition}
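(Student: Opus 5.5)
The plan is a direct unwinding of definitions, with clause~3 of Theorem~\ref{thm:quotient} (residual restriction naturality) doing the main work. Fix a face restriction $g:V'\to V$ in the cover intersection diagram and a supported atom $\lambda\in S(V)$.

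First we check that both sides are well defined. Support preservation of the restriction $\Lambda(V)\to\Lambda(V')$ gives $\lambda|_{V'}\in S(V')$. Hence $\chi^E_{V'}(\lambda|_{V'})$ is defined, with data $i_{\lambda|_{V'}}$, $A_{\lambda|_{V'}}$ and $\pi_{V'}(\lambda|_{V'})$.

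Next we rewrite the left-hand side using the selected compatibility conditions $i_{\lambda|_{V'}}=i_\lambda$ and $A_{\lambda|_{V'}}=A_\lambda$. The Atom-preservation of the projection gives $\pi_{V'}(\lambda|_{V'})=\pi_V(\lambda)=:a$. Therefore
\[
\chi^E_{V'}(\lambda|_{V'})=[\epsilon_{V',A_\lambda,i_\lambda,a}]\in\QE(V').
\]

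For the right-hand side, we view $(V,A_\lambda,i_\lambda,a)$ as a displayed equation source $q$ with $W_q=V$ and $d_q=\epsilon_{V,A_\lambda,i_\lambda,a}$. This requires $i_\lambda$ to be a required index, which is part of the selected data. Clause~3 of Theorem~\ref{thm:quotient}, applied to $g:V'\to V$, then gives
\[
\chi^E_V(\lambda)|_{V'}=\mathrm{res}_g([d_q])=[\mathrm{res}_g(d_q)]=[\epsilon_{V',A_\lambda,i_\lambda,a}].
\]
The second equality holds because the restriction on $\QE$ is induced from $\ObsE$ through the quotient, by clause~1. The third equality is the restriction compatibility of the $\epsilon$ coordinates from \S\ref{sec:34}. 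Comparing this with the left-hand side proves the claim.

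The only real point of care is the bookkeeping needed to legitimately treat $V$ as the base chart of a displayed source. Theorem~\ref{thm:quotient} is stated for finitely many contexts $W_q\to W_{\mathrm{base}}$. If that framing feels strained, I would bypass clause~3 and argue directly: the quotient restriction is well defined because $\mathrm{res}_g(\IOb(V))\subset\IOb(V')$ by Definition~\ref{def:witnessideal}, and then apply the restriction compatibility of $\epsilon$ to a representative. One should also note that the architecture reading $A_\lambda$ stays fixed under restriction. This is exactly what the selected condition $A_{\lambda|_{V'}}=A_\lambda$ guarantees, and without it the residual on $V'$ would be evaluated on a different object.
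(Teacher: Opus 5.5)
Your proposal is correct and follows the paper's proof: you rewrite the left side using the selected compatibility $i_{\lambda|_{V'}}=i_\lambda$, $A_{\lambda|_{V'}}=A_\lambda$ and the Atom-preservation of $\pi$, then apply the residual restriction naturality (clause~3) of Theorem~\ref{thm:quotient}. The paper leaves implicit your extra well-definedness check via support preservation and your fallback argument through the ideal subpresheaf property; both are sound additions.
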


\begin{proof}
The indices $i_\lambda$ and readings $A_\lambda$ are chosen compatibly
with restriction, and $\pi$ preserves the underlying Atom
(\S\ref{sec:42}). Hence the residual at $\lambda|_{V'}$ is the
$\epsilon$ evaluated on $V'$ with the same reading, index, and support
Atom, and by the residual restriction naturality of
Theorem~\ref{thm:quotient} its class coincides with the restriction of
$\chi^E_V(\lambda)$.
\end{proof}

$\chi^E$ is the canonical instance, constructed from the equation system
and displayed readings, of the restriction-natural map that assumption~3
of Theorem~\ref{thm:central} requires; it shows that the starting point
of the comparison map is generated from nothing more than ``which
failure, of which reading of which equation''.

\section{The SAGA Comparison Theorem}\label{sec:5}

This section states and proves the central theorem of SAGA. Routine
verifications (such as componentwise checks of restriction naturality)
are compressed in the text, but every main argument that makes the
comparison theorem hold can be followed continuously within this
section.

\subsection{The central theorem}\label{sec:51}

Before the formal statement, we describe the shape of the claim in
plain terms. The stage is a selected monomorphic AAT cover, on which
stand the two complexes constructed independently in
Section~\ref{sec:4} --- the semantic side in the language of repair and
the equation side in the language of equations. The theorem asserts
that, under finitely many selection conditions matching the local data
of the two sides, (i) the two coefficient presheaves become naturally
isomorphic; (ii) this isomorphism identifies the two $H^1$ groups and
sends the obstruction class of the repair side to the obstruction class
of the equation side; and (iii) if moreover the family of repair states
satisfies the gluing condition (the true sheaf condition), then the
existence of a global repair is equivalent to the vanishing of the
obstruction class --- three stages in all. The inputs 1--8 below
(referred to interchangeably as assumptions 1--8) can be read in four
groups: inputs 1--2 are the two coefficients; inputs 3--4
are the correspondence connecting the coefficients and its
completeness; inputs 5--7 are the local state systems generating the
residuals and their correspondence; input 8 is the normalization of
empty overlaps.

\begin{theorem}[SAGA central theorem]\label{thm:central}
On a monomorphic AAT cover $\cU$, fix the following:
\begin{enumerate}
\item the abelian-group-valued presheaf $\Msem$ generated from
supported semantic atoms and a restriction-stable local repair relation
(\S\ref{sec:42});
\item the abelian-group-valued presheaf $\QE$ generated from an
architectural equation system $E$;
\item a restriction-natural map sending supported semantic atoms to
displayed Atom/equation residual classes;
\item completeness of the local repair relation with respect to that
map, and equation-generator completeness (checked on each cover
intersection);
\item an affine semantic repair system $\Psem$ deriving an additive
torsor structure from the action of repair words, together with its
selected local repair atlas;
\item an equation-lift system $\PE$ acted on by $\QE$, together with
its selected local lift atlas;
\item a restriction-natural, generator-equivariant map
$\beta:\Psem\to\PE$ sending semantic local states to equation local
lifts (checked on each cover intersection);
\item for each empty cover intersection $V$ excluded from the products,
$\Msem(V)=\QE(V)=0$ and the subsingleton property of $\Psem(V)$ and
$\PE(V)$ (empty-overlap normalization).
\end{enumerate}
Then the following three stages hold.

\textbf{(i) SAGA Presentation Theorem.} The free extension
$\widetilde\chi_V:\Fsem(V)\to \QE(V)$ of the map of assumption~3
(\S\ref{sec:53}) satisfies, on each nonempty cover intersection $V$,
\[
\ker \widetilde\chi_V=\Rrep(V),
\qquad
\operatorname{im}\widetilde\chi_V=\QE(V).
\]
Of the kernel equality, the inclusion
$\Rrep(V)\subseteq\ker\widetilde\chi_V$ (repair-relation soundness) is
not an assumption: it is derived from the local-state data
(assumptions~5--7) in \S\ref{sec:53}. The reverse inclusion
$\ker\widetilde\chi_V\subseteq\Rrep(V)$ is repair-relation
completeness, and the image equality is equation-generator
completeness; both are assumption~4. By this exactness, the map induces
a natural isomorphism on the cover intersection diagram
\[
\Phi:\Msem\xrightarrow{\ \sim\ }\QE.
\]

\textbf{(ii) \v{C}ech comparison.} $\Phi$ induces a degreewise cochain
isomorphism between the \v{C}ech complexes built separately from
$\Msem$ and $\QE$,
\[
\kappa^\bullet:
C^\bullet_{\mathrm{sem}}(\mathcal U)
\xrightarrow{\ \sim\ }
C^\bullet_E(\mathcal U),
\]
and $\kappa$ commutes with the differentials. Hence an isomorphism
\[
\kappa_*:
H^1_{\mathrm{sem}}(\mathcal U)
\xrightarrow{\ \sim\ }
\check H^1(\mathcal U,\QE)
\]
is induced. For the residual $\rsem$ generated from the semantic local
repair atlas and the residual $\rE$ generated from the equation
local lift atlas,
\[
\kappa_*([\rsem])=[\rE].
\]
Because the selected local atlases restrict to each nonempty
intersection, the local-state systems used by this construction are
nonempty on each intersection. The correspondence of residuals holds at the
cochain level, $\kappa^1(\rsem)=\rE$, when the two local atlases are
chosen aligned via $\beta$; when they are chosen independently, the
difference of the two cochains is an explicit $\delta^0$-image, and the
equality of classes holds.

\textbf{(iii) Grounded Global Gluing.} If moreover the $\Psem$ of
assumption~5 is a true semantic repair sheaf (defined in
\S\ref{sec:56}), then
\[
\mathrm{Nonempty}\,\Psem(W)
\iff
[\rsem]=0
\iff
[\rE]=0.
\]
\end{theorem}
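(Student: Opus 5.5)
The proof runs in the order (i)$\to$(ii)$\to$(iii); everything except repair-relation soundness is a functoriality or torsor computation. For (i), I extend the restriction-natural map of assumption~3 (for instance the $\chi^E$ of Proposition~\ref{prop:chiE}) freely, using the universal property of the free abelian group, to homomorphisms $\widetilde\chi_V:\Fsem(V)\to\QE(V)$. Naturality on generators gives naturality of $\widetilde\chi$ on the cover intersection diagram. The one nontrivial point is soundness, $\Rrep(V)\subseteq\ker\widetilde\chi_V$, which I derive from assumptions~5--7 as follows.

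\emph{Soundness argument.} Fix a nonempty intersection $V$. By the remark in stage~(ii), the restricted atlas guarantees that $\Psem(V)\neq\emptyset$. Choose a state $p\in\Psem(V)$. Generator-equivariance of $\beta$ says $\beta(\lambda\cdot p)=\widetilde\chi_V(\lambda)+\beta(p)$ for each $\lambda\in S(V)$. Since both actions are by abelian groups, this extends additively to $\beta(w\cdot p)=\widetilde\chi_V(w)+\beta(p)$ for every word $w$. If $w\in\Rrep(V)$, then action soundness gives $w\cdot p=p$. Hence $\widetilde\chi_V(w)+\beta(p)=\beta(p)$, and freeness of the $\QE(V)$-action on $\PE(V)$ forces $\widetilde\chi_V(w)=0$.

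The reverse inclusion and surjectivity are exactly the two halves of assumption~4. By the first isomorphism theorem, $\widetilde\chi_V$ then descends to isomorphisms $\Phi_V:\Msem(V)\to\QE(V)$. These are natural because $\widetilde\chi$ is natural and restrictions on $\Msem$ are induced from those on $\Fsem$. On empty intersections, $\Phi_V$ is the zero map $0\to0$ by assumption~8.

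For (ii), I set $\kappa^n=\prod\Phi$ componentwise on $C^n$ for $n=0,1,2$. Each $\kappa^n$ is an isomorphism because each $\Phi_V$ is. Commutation with $\delta^0,\delta^1$ follows from naturality of $\Phi$ together with additivity. Hence $\kappa$ carries cocycles to cocycles and coboundaries to coboundaries, and it induces $\kappa_*$ with inverse $(\kappa^{-1})_*$.

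\emph{Residual correspondence.} Given the semantic atlas $\{p_i\}$, I set $e_i:=\beta(p_i)$. Write $r_{ij}:=r_{\mathrm{sem},ij}$, so that $p_j|_{U_{ij}}=r_{ij}+p_i|_{U_{ij}}$ in the torsor. Restriction-naturality of $\beta$ and the extended equivariance identity then give $e_j|_{U_{ij}}=\Phi(r_{ij})+e_i|_{U_{ij}}$, i.e.\ $\kappa^1(\rsem)=\rE$ on the nose. For an independently chosen lift atlas $\{e'_i\}$, transitivity of the $\QE$-action gives $e'_i=b_i+e_i$ for a unique $b$, and the torsor computation of \S\ref{sec:43} yields $r'_E=\rE+\delta^0b$. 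Hence $\kappa_*([\rsem])=[\rE]$ independently of the choices.

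For (iii), the second equivalence, $[\rsem]=0\iff[\rE]=0$, is immediate because $\kappa_*$ is an isomorphism sending one class to the other.

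\emph{Forward direction.} If some $P\in\Psem(W)$ exists, take the atlas $p_i:=P|_{U_i}$. Then $\rsem=0$ on the nose, and the class vanishes by choice independence.

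\emph{Converse.} If $[\rsem]=0$, the matching-correction paraphrase of \S\ref{sec:42} gives $a\in C^0_{\mathrm{sem}}$ such that $q_i:=(-a_i)+p_i$ agree on all nonempty overlaps. On empty overlaps, the subsingleton clause of assumption~8 makes agreement automatic. So $\{q_i\}$ is a compatible family on the cover, and the true sheaf condition (Definition~\ref{def:presheaf}, \S\ref{sec:56}) glues it to a section of $\Psem(W)$.

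\emph{Main obstacle.} I expect the real obstacle to be the bookkeeping in (iii) rather than the algebra. It requires that $\cU$ be a covering in $J_{E,R,\mathit{Ov}}$, that overlaps used in the sheaf condition coincide with the pullbacks $U_{ij}$ (this is where monomorphy of the $u_i$ is used), and that empty overlaps are handled by assumption~8. I will treat this step with the most care.
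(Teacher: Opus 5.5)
Your proposal is correct and follows the paper's proof essentially step for step. The shared steps are: the free extension of $\chi$; soundness from action soundness, additively extended equivariance of $\beta$, and freeness of the $\QE(V)$-action; the componentwise $\kappa$; the torsor comparison $\rE=\kappa^1(\rsem)+\delta^0_E h$ (your $b$, which you reach through the aligned atlas $e_i=\beta(p_i)$); and gluing the corrected family $(-a_i)+p_i$ by the true sheaf condition. Your ``nonempty $\Rightarrow$ zero class'' direction, via the restricted atlas $P|_{U_i}$ and choice independence, only repackages the paper's computation $p_i=a_i+P|_{U_i}$.

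The bookkeeping you defer is exactly the paper's Lemma~5.2A. It should be stated precisely, because the matching correction only gives agreement on nonempty overlaps with $i<j$, not on ``all nonempty overlaps'' as you wrote. Self-overlaps are handled because $u_i$ is a monomorphism, so the diagonal $U_i\to U_i\times_W U_i$ is an isomorphism under which the two projections coincide. Reversed overlaps are handled by the pullback symmetry $U_j\times_W U_i\cong U_i\times_W U_j$, which exchanges the projections, and empty overlaps by the subsingleton clause of assumption~8.
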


\paragraph{The character of the three stages.}
In (ii), the cochain isomorphism and the induced $H^1$ isomorphism are
the general mechanism of \v{C}ech complexes that follows from a natural
isomorphism of coefficient presheaves. The latter half of (ii), the
residual correspondence, is part of the specific content of the theorem.
That specific content is concentrated in (i), (iii), and the residual
correspondence: the independent construction of the semantic repair
presentation and the equation-generated quotient presentation
(Section~\ref{sec:4}); the construction of the generator map $\chi$
(Proposition~\ref{prop:chiE}); the derivation of relation soundness
from the local-state interpretation $\beta$ (\S\ref{sec:53}); the
placement of completeness / generation as conditions relativized to
architecture data; and the semantic correspondence of residual classes
(\S\ref{sec:55}).

\subsection{The structure of the proof}\label{sec:52}

The proof is organized in four steps, and
\S\ref{sec:53}--\S\ref{sec:56} give the arguments in this order:
\begin{center}
\small
\begin{tabular}{ll}
generator map $\chi$ $\to$ coefficient isomorphism
$\Phi : \Msem \simeq \QE$ & (\S\ref{sec:53})\\
$\to$ cochain isomorphism $\kappa$, $\kappa\delta=\delta\kappa$ &
(\S\ref{sec:54})\\
$\to$ $H^1$ isomorphism $\kappa_*$ & (\S\ref{sec:54})\\
$\to$ residual correspondence $\kappa_*([\rsem])=[\rE]$ &
(\S\ref{sec:55})\\
$\to$ grounded global gluing & (\S\ref{sec:56})\\
\end{tabular}
\end{center}

That soundness is a consequence rather than an assumption is the
structural crux of the theorem; its derivation opens \S\ref{sec:53}.

The comparison core (\S\ref{sec:53}--\S\ref{sec:55}) uses only local
data on cover intersections; the global sheaf condition does not appear
until the gluing of \S\ref{sec:56}. Which proof consumes which
assumption of Theorem~\ref{thm:central} is fixed as a table in
Appendix~\ref{app:A}.

The overall structure of the proof is shown in
Figure~\ref{fig:comparison}.

\begin{figure}[htbp]
\centering
\includegraphics[width=0.92\linewidth]{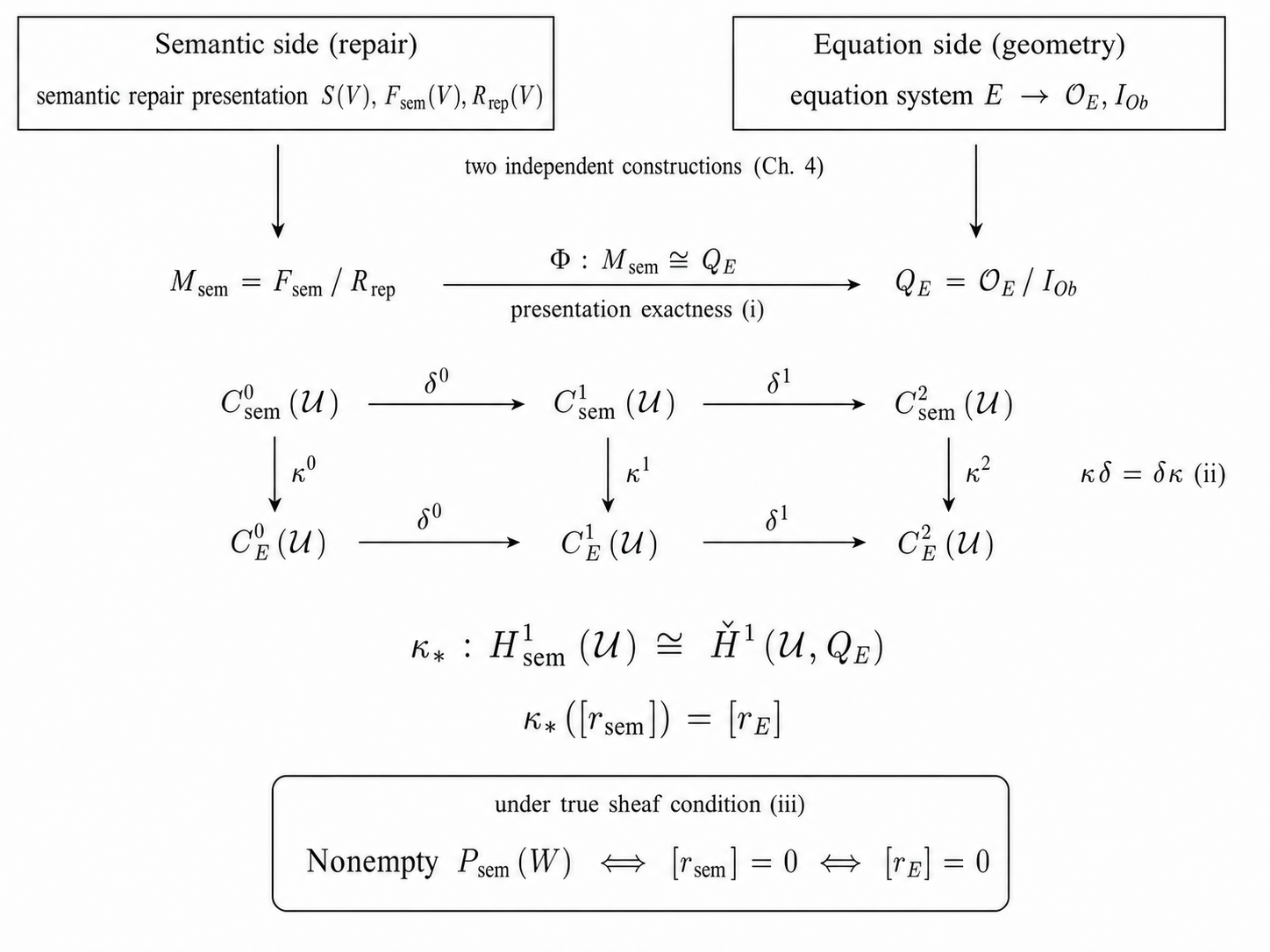}
\caption{The structure of the SAGA comparison theorem
(Theorem~\ref{thm:central}). The semantic side (upper left) and the
equation side (upper right), constructed independently in
Section~\ref{sec:4}, pass through the coefficient isomorphism $\Phi$
(i) and the cochain isomorphism $\kappa$ commuting with the
differentials (ii) to the $H^1$ isomorphism $\kappa_*$ and the residual
class correspondence $\kappa_*([\rsem])=[\rE]$; under the true sheaf
condition the three-way equivalence for global repair (iii) holds.}
\label{fig:comparison}
\end{figure}

\subsection{Coefficient isomorphism: generation of soundness and presentation comparison}\label{sec:53}

The map of assumption~3 is a restriction-natural correspondence
\[
\chi_V:S(V)\longrightarrow \QE(V)
\]
sending supported semantic atoms to equation coefficients on each
nonempty cover intersection $V$ (the $\chi^E$ of
Proposition~\ref{prop:chiE} is a constructed instance of this input).
Since $\Fsem(V)$ is the free abelian group on $S(V)$ (\S\ref{sec:42}),
by the universal property $\chi_V$ extends to a unique homomorphism
$\widetilde\chi_V:\Fsem(V)\to \QE(V)$. The $\Phi$ to be constructed is
the map induced by $\widetilde\chi$ on the quotient by the repair
relation, and its well-definedness, injectivity, and surjectivity are
carried by three distinct conditions: soundness, completeness, and
generation. The joint validity of these three conditions is called
\textbf{SAGA presentation exactness}.

\paragraph{Generation of soundness.}
Take $x\in\Rrep(V)$ and $p\in\Psem(V)$ (under
Theorem~\ref{thm:central}, $\Psem(V)$ is nonempty by restriction of the
selected local atlas). By the action soundness of \S\ref{sec:42}, the
semantic action gives $x+p=p$. The generator-level equivariance of
$\beta$ extends to finite sums and additive inverses in the free group,
so
\[
\beta(p)=\beta(x+p)=\widetilde\chi(x)+\beta(p).
\]
Since the action of $\QE(V)$ on $\PE(V)$ is free, $\widetilde\chi(x)=0$.
Therefore
\[
\Rrep(V)\subset\ker\widetilde\chi_V.
\]
Deriving soundness from the local-state data instead of assuming it is
exactly this three-line argument.

The definition of $\beta$ in assumption~7 does not mention $\Rrep$;
all that is required is restriction naturality and generator-level
equivariance, so the derivation is not circular. That $\beta$ can be
constructed independently of soundness is shown by the concrete map in
the finite example of \S\ref{sec:58}.

\paragraph{Descent to the quotient.}
Since $\Rrep(V)\subset\ker\widetilde\chi_V$, the map
$\widetilde\chi_V$ factors through the quotient, inducing a unique
homomorphism
\[
\Phi_V:
\Msem(V)=\Fsem(V)/\Rrep(V)
\longrightarrow \QE(V).
\]

\paragraph{Injectivity.}
If $\Phi_V([x])=0$ then $x\in\ker\widetilde\chi_V$, and by
repair-relation completeness (assumption~4) $x\in\Rrep(V)$, hence
$[x]=0$.

\paragraph{Surjectivity.}
For any $q\in \QE(V)$, equation-generator completeness (assumption~4)
provides $x\in\Fsem(V)$ with $\widetilde\chi_V(x)=q$, and
$\Phi_V([x])=q$.

\paragraph{Naturality.}
On generators,
\[
\Phi_{V'}([\lambda|_{V'}])
  =\chi_{V'}(\lambda|_{V'})
  =\chi_V(\lambda)|_{V'}
  =\Phi_V([\lambda])|_{V'},
\]
where the second equality is the restriction naturality of $\chi$
(assumption~3). Since the generators generate $\Msem(V)$, $\Phi$
commutes with restriction on all elements.

Thus $\Phi:\Msem\simeq \QE$ is a natural isomorphism of presheaves on
the cover intersection diagram. This isomorphism is not a transport of
one complex from the other but a presentation comparison between the
semantic presentation and the equation quotient. The three conditions
carry three different mathematical jobs --- existence, injectivity, and
surjectivity of the map --- and removing any one of them destroys the
isomorphism, as counterexamples over the free group on one generator
$F=\Zint$ confirm:
\begin{itemize}
\item \emph{dropping soundness}: $R=2\Zint$, $Q=\Zint$, $\chi(n)=n$.
The relation $2=0$ does not become zero in $Q$, and the map does not
descend to the quotient.
\item \emph{dropping completeness}: $R=0$, $Q=\Zint/(2)$,
$\chi(n)=[n]$. The map descends and is surjective, but $2$ remains in
the kernel and the map is not injective.
\item \emph{dropping generation}: $R=2\Zint$, $Q=\Zint/(4)$,
$\chi(n)=[2n]$. The kernel coincides with $R$, but the image is
$\{0,[2]\}$ and the map is not surjective.
\end{itemize}

\subsection{Cochain commutation and the \texorpdfstring{$H^1$}{H1} isomorphism}\label{sec:54}

\paragraph{Construction of the degreewise map.}
For each degree $n=0,1,2$, define
\[
\kappa^n:C^n_{\mathrm{sem}}(\mathcal U)\longrightarrow C^n_E(\mathcal U)
\]
by applying $\Phi$ to each intersection component:
\[
(\kappa^0a)_i=\Phi_{U_i}(a_i),\qquad
(\kappa^1c)_{ij}=\Phi_{U_{ij}}(c_{ij}),\qquad
(\kappa^2z)_{ijk}=\Phi_{U_{ijk}}(z_{ijk}).
\]
Since each $\Phi_V$ is an isomorphism, $\kappa^n$ is an isomorphism on
the product.

\paragraph{Commutation with the differentials.}
What must be shown is the commutativity of the diagram
\[
\begin{array}{ccccc}
C^0_{\mathrm{sem}}(\mathcal U)
&\xrightarrow{\ \delta^0_{\mathrm{sem}}\ }&
C^1_{\mathrm{sem}}(\mathcal U)
&\xrightarrow{\ \delta^1_{\mathrm{sem}}\ }&
C^2_{\mathrm{sem}}(\mathcal U)\\[2pt]
\big\downarrow{\scriptstyle\kappa^0}&&\big\downarrow{\scriptstyle\kappa^1}&&\big\downarrow{\scriptstyle\kappa^2}\\[2pt]
C^0_E(\mathcal U)
&\xrightarrow{\ \delta^0_E\ }&
C^1_E(\mathcal U)
&\xrightarrow{\ \delta^1_E\ }&
C^2_E(\mathcal U)
\end{array}
\]
For $a\in C^0_{\mathrm{sem}}(\cU)$ and $i<j$,
\[
\begin{aligned}
(\kappa^1\delta^0_{\mathrm{sem}}a)_{ij}
&=\Phi_{U_{ij}}(a_j|_{U_{ij}}-a_i|_{U_{ij}})\\
&=\Phi_{U_j}(a_j)|_{U_{ij}}-\Phi_{U_i}(a_i)|_{U_{ij}}\\
&=(\delta_E^0\kappa^0a)_{ij},
\end{aligned}
\]
where the second line is the additivity and restriction naturality of
$\Phi$. Applying the same computation to the $(i,j,k)$ component of
$c\in C^1_{\mathrm{sem}}(\cU)$ (the alternating three-term sum) yields
$\kappa^2\delta^1_{\mathrm{sem}}=\delta^1_E\kappa^1$. Only the
additivity and naturality of $\Phi$ proved in \S\ref{sec:53} are used;
no new assumption enters here.

\paragraph{$H^1$ isomorphism.}
Since $\kappa$ is a degreewise isomorphism commuting with the
differentials, it preserves and reflects cocycles and coboundaries.
Hence
\[
\kappa_*:
H^1_{\mathrm{sem}}(\mathcal U)
\longrightarrow
\check H^1(\mathcal U,\QE),
\qquad
[c]\longmapsto[\kappa^1c]
\]
is well defined: replacing a representative by
$c+\delta^0_{\mathrm{sem}}a$ gives
$\kappa^1(c+\delta^0_{\mathrm{sem}}a)=\kappa^1c+\delta^0_E(\kappa^0a)$, and the class
does not change. The inverse map is induced in the same form by the
degreewise inverse $(\kappa^1)^{-1}$, and $\kappa_*$ is an isomorphism
of abelian groups.

\subsection{Residual correspondence}\label{sec:55}

Choose the semantic atlas $\{p_i\}$ and the equation atlas $\{e_i\}$
independently. On each chart, $\beta(p_i)$ and $e_i$ are elements of
the same $\QE(U_i)$-torsor $\PE(U_i)$, so there is a unique
$h_i\in \QE(U_i)$ with $e_i=h_i+\beta(p_i)$. On $U_{ij}$, using the
naturality and equivariance of $\beta$ (assumption~7),
\[
\begin{aligned}
e_j
&=h_j+\beta(p_j)\\
&=h_j+\beta(r_{\mathrm{sem},ij}+p_i)\\
&=h_j+\Phi(r_{\mathrm{sem},ij})+\beta(p_i)\\
&=(\Phi(r_{\mathrm{sem},ij})+h_j-h_i)+e_i.
\end{aligned}
\]
By uniqueness of differences in the equation torsor,
\[
r_{E,ij}=\Phi(r_{\mathrm{sem},ij})+h_j-h_i,
\qquad\text{that is,}\qquad
\rE=\kappa^1(\rsem)+\delta_E^0h.
\]
When the two atlases are chosen aligned via $\beta$, i.e.\ when
$e_i=\beta(p_i)$, we have $h=0$ and the cochain-level equality
$\kappa^1(\rsem)=\rE$ holds. When they are chosen independently, the
difference of the two cochains is the explicit coboundary
$\delta^0_Eh$, so the equality of classes
\[
\kappa_*([\rsem])=[\rE]
\]
holds. Since $\kappa_*$ is an isomorphism and preserves and reflects
zero, the equivalences $[\rsem]=0\iff[\rE]=0$ and
$[\rsem]\neq0\iff[\rE]\neq0$ follow as well.

This proves parts (i) and (ii) of Theorem~\ref{thm:central}
(\S\ref{sec:53}--\S\ref{sec:55}). The remaining three-way equivalence
(iii) is proved in the next subsection.

\subsection{Global repair: Grounded Global Gluing}\label{sec:56}

\begin{definitionstar}[true semantic repair sheaf]
A presheaf $\Psem$ on $S_X$ is called a \textbf{true semantic repair
sheaf} when: (1) $\Psem$ satisfies the sheaf condition for all covers
of the selected AAT topology; (2) the action of $\Msem$ commutes with
restriction and is locally free and transitive; (3) $\cU$ is a
monomorphic AAT cover belonging to that topology; (4) on the empty
intersections excluded from the products, $\Psem$ is subsingleton.
From these four conditions the sheaf condition for $\cU$ is derived,
and no per-cover amalgamation map is posited as separate data.
\end{definitionstar}

\begin{lemmafiveA}[ordered matching completion]
Suppose each morphism of the cover is a monomorphism and $\Psem$ is
subsingleton on the empty overlaps excluded from the products
(empty-overlap normalization). If a family $(q_i)$,
$q_i\in\Psem(U_i)$, agrees on all nonempty $i<j$ overlaps, then
$(q_i)$ is a matching family on all ordered overlaps.
\end{lemmafiveA}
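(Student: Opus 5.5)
The plan is to check the matching condition for every ordered pair $(i,j)$ of indices, not only for $i<j$. For a presheaf on $S_X$, a \emph{matching family} on the cover means that for every pair $(i,j)$, with the overlap $U_i\times_W U_j$ and its two projections $\mathrm{pr}_1,\mathrm{pr}_2$, we have $q_i|_{\mathrm{pr}_1}=q_j|_{\mathrm{pr}_2}$. I would split the pairs into three cases: $i=j$, $i<j$, and $i>j$. Within the last two cases I would further separate overlaps that are nonempty, and so appear in the cover intersection diagram, from those that are empty and are excluded from the products.

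\emph{Diagonal case $i=j$.} Here the monomorphism hypothesis is used. Since $u_i$ is a monomorphism, the two projections $U_i\times_W U_i\rightrightarrows U_i$ are equal, because $u_i\mathrm{pr}_1=u_i\mathrm{pr}_2$. Indeed, the diagonal $U_i\to U_i\times_W U_i$ is an isomorphism. The two restrictions of $q_i$ therefore coincide trivially, and no hypothesis on $(q_i)$ is needed.

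\emph{Case $i<j$.} If $U_{ij}$ is nonempty, the condition is exactly the hypothesis. If $U_{ij}$ is empty, the empty-overlap normalization says $\Psem(U_{ij})$ is a subsingleton. The two restricted sections are then elements of a subsingleton, so they are equal.

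\emph{Case $i>j$.} I would use the canonical swap isomorphism $\sigma:U_i\times_W U_j\cong U_j\times_W U_i$, which satisfies $\mathrm{pr}_1\sigma=\mathrm{pr}_2$ and $\mathrm{pr}_2\sigma=\mathrm{pr}_1$. By functoriality of restriction, the agreement $q_j|_{U_{ji}}=q_i|_{U_{ji}}$ already established for the pair $j<i$, transported along $\sigma$, gives the required identity on $U_i\times_W U_j$. The same transport handles emptiness: the overlap $U_i\times_W U_j$ is empty or nonempty together with $U_{ji}$, and in the empty case subsingleton-ness again settles the matter.

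I expect the main obstacle to be bookkeeping rather than mathematics. One must state precisely which object the ``ordered overlap'' is, and which pullback is designated as $U_{ij}$ in the cover intersection diagram, since only that chosen representative carries the hypothesis. The swap argument should therefore be phrased as restriction along an isomorphism between chosen pullbacks. Likewise, emptiness should be formulated as a property invariant under that isomorphism, so that the normalization, which is stated only for excluded intersections, also covers the swapped and diagonal objects.
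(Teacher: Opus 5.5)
Your proposal is correct and follows essentially the same route as the paper: the monomorphism property settles the self-overlap (your observation that $u_i\mathrm{pr}_1=u_i\mathrm{pr}_2$ forces $\mathrm{pr}_1=\mathrm{pr}_2$ is the same fact as the paper's diagonal isomorphism), the canonical pullback symmetry settles reversed overlaps, and the subsingleton normalization settles the excluded empty overlaps. Your care about transporting along an isomorphism between chosen pullbacks is exactly the compatibility of restrictions along the canonical isomorphism that the paper invokes.
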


\begin{proof}
Three kinds of overlap remain. On a self-overlap $U_{ii}$: since $u_i$
is a monomorphism, the diagonal $U_i\to U_i\times_W U_i$ is an
isomorphism, and under this canonical isomorphism the two projections
become the same morphism; hence the two restrictions of $q_i$ agree.
On a reversed overlap $U_{ji}$ ($i<j$): the pullback symmetry
$U_j\times_W U_i\cong U_i\times_W U_j$ is a canonical isomorphism
interchanging the two projections; restrictions are compatible along
this canonical isomorphism, so agreement on $U_{ji}$ is equivalent to
agreement on $U_{ij}$. On the empty overlaps excluded from the
products, $\Psem$ is subsingleton, so agreement is automatic.
\end{proof}

\begin{theorem}[Grounded Global Gluing]\label{thm:gluing}
In the setting of Theorem~\ref{thm:central}, if in addition $\Psem$ is
a true semantic repair sheaf, then the three-way equivalence
\[
\mathrm{Nonempty}\,\Psem(W)
\iff
[\rsem]=0
\iff
[\rE]=0
\]
holds. Moreover, for the correction $a$ obtained from a proof of
$[\rsem]=0$, there exists a unique global section whose restrictions
are the corrected family
\[
p_i^{\mathrm{corr}}=(-a_i)+p_i.
\]
\end{theorem}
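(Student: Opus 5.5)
The plan is to prove the three-way equivalence by first establishing $\mathrm{Nonempty}\,\Psem(W)\iff[\rsem]=0$ directly, and then appending $[\rsem]=0\iff[\rE]=0$. The latter is already available: by part (ii) of Theorem~\ref{thm:central} (\S\ref{sec:55}), $\kappa_*([\rsem])=[\rE]$, and $\kappa_*$ is an isomorphism, so it preserves and reflects zero. All the real work is therefore on the semantic side, where the true sheaf condition is finally consumed.

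\emph{Forward direction.} Suppose $s\in\Psem(W)$. Set $q_i:=s|_{U_i}$. On each nonempty $U_{ij}$, $q_i$ and $q_j$ restrict to the same element $s|_{U_{ij}}$, by functoriality of restriction (both composites $U_{ij}\to U_i\to W$ and $U_{ij}\to U_j\to W$ agree as the pullback square commutes). Each $q_i$ and the atlas element $p_i$ lie in the $\Msem(U_i)$-torsor $\Psem(U_i)$, so there is a unique $a_i$ with $p_i=a_i+q_i$. Restricting to $U_{ij}$ and using that restriction commutes with the action gives
\[
r_{\mathrm{sem},ij}=p_j|_{U_{ij}}-p_i|_{U_{ij}}=a_j|_{U_{ij}}-a_i|_{U_{ij}},
\]
because the $q$-parts cancel by freeness. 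Thus $\rsem=\delta^0a$, and hence $[\rsem]=0$. Equivalently, this is the choice-independence computation of \S\ref{sec:42} applied to the atlas $\{q_i\}$, whose residual is $0$.

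\emph{Backward direction and uniqueness.} From $[\rsem]=0$ choose $a\in C^0_{\mathrm{sem}}(\cU)$ with $\rsem=\delta^0a$, and set $p_i^{\mathrm{corr}}=(-a_i)+p_i$. The torsor computation of \S\ref{sec:42} gives that the residual of the corrected atlas is $\rsem-\delta^0a=0$. Since the action is free, the corrected family agrees on every nonempty overlap $U_{ij}$ with $i<j$. Lemma~5.2A, which uses the monomorphism hypothesis in condition (3) and subsingleton empty overlaps in condition (4), upgrades this to a matching family on all ordered overlaps. The sheaf condition of the true semantic repair sheaf, applied to $\cU$, which belongs to the topology by condition (3), then yields a unique $s\in\Psem(W)$ with $s|_{U_i}=p_i^{\mathrm{corr}}$. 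This gives nonemptiness and the asserted uniqueness.

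The main obstacle is the bookkeeping between the ``ordered, nonempty, $i<j$'' \v{C}ech data and the full matching condition that the abstract sheaf condition demands, including self-overlaps, reversed overlaps, and empty pullbacks. Lemma~5.2A is designed for exactly this, so the step reduces to checking that its hypotheses are precisely conditions (3)–(4) of the definition. A secondary point is that the torsor structure, used both for defining $a_i$ in the forward direction and for cancelling in the backward direction, is available only on the cover intersection diagram. This is fine, since everything above happens there; $W$ itself only receives the glued section.
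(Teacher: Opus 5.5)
Your proposal is correct and is essentially the paper's proof. From a global section you write $p_i=a_i+s|_{U_i}$ in the chart torsors to get $\rsem=\delta^0a$; from $[\rsem]=0$ you pass to the corrected family, complete it with Lemma~5.2A, and amalgamate via the sheaf condition on $\cU$ (which also gives uniqueness); and you import $[\rsem]=0\iff[\rE]=0$ from the residual class correspondence of \S\ref{sec:55}, with only your ``forward''/``backward'' labels swapped relative to the paper's.
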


\begin{proof}[Proof (forward)]
Suppose $[\rsem]=0$. Since the zero class is equivalent to the
existence of a matching correction (\S\ref{sec:42}), there is
$a\in C^0_{\mathrm{sem}}(\cU)$ with $\rsem=\delta^0_{\mathrm{sem}}a$.
The residual of the corrected family $p_i^{\mathrm{corr}}=(-a_i)+p_i$
is $\rsem-\delta^0_{\mathrm{sem}}a=0$ by choice independence (\S\ref{sec:42}), so the
corrected family agrees on all nonempty $i<j$ overlaps. By
Lemma~5.2A, the corrected family is a matching family on all ordered
overlaps. Since $\cU$ is a cover of the topology and $\Psem$ is a sheaf
for that topology, sheaf amalgamation provides a unique
$p\in\Psem(W)$ with $p|_{U_i}=p_i^{\mathrm{corr}}$.
\end{proof}

\begin{proof}[Proof (reverse)]
Conversely, suppose some $p\in\Psem(W)$ exists. On each chart, by
transitivity and freeness of the torsor there is a unique
$a_i\in\Msem(U_i)$ with $p_i=a_i+p|_{U_i}$. Taking differences on
overlaps,
\[
r_{\mathrm{sem},ij}=a_j-a_i,
\qquad\text{that is,}\qquad
\rsem=\delta^0_{\mathrm{sem}}a,
\]
so $[\rsem]=0$. The final equivalence of the three,
$[\rsem]=0\iff[\rE]=0$, is the residual class correspondence of
\S\ref{sec:55}.
\end{proof}

What is unique here is the amalgamation of the fixed corrected
matching family, not uniqueness of $\Psem(W)$ as a whole; the
difference between distinct global repairs is carried by degree-zero
data.

\paragraph{The equation side.}
The same conclusion descends to the equation side. Three additional
assumptions are made: SAGA presentation exactness holds on all
contexts of $S_X$; the $\beta$ of assumption~7 is given as a natural
transformation on the whole site (the comparison core uses only its
components on the intersection diagram, so this globalization is an
additional assumption); and $\PE$ is also a sheaf for the selected
topology. On each chart and nonempty overlap, the
$\Phi_V$-equivariant $\beta_V:\Psem(V)\to\PE(V)$ is a bijection:
injectivity follows from the transitivity of the semantic torsor, the
freeness of the target torsor, and the injectivity of $\Phi_V$;
surjectivity from the transitivity of the target torsor and the
surjectivity of $\Phi_V$. This local bijectivity lifts to the global
level in three steps. Injectivity: if $\beta_W(p)=\beta_W(p')$, then
on each chart the injectivity of $\beta_{U_i}$ gives
$p|_{U_i}=p'|_{U_i}$, and the separatedness of $\Psem$ gives $p=p'$.
Surjectivity: for $q\in\PE(W)$, set
$p_i:=\beta_{U_i}^{-1}(q|_{U_i})$ on each chart; by the naturality of
$\beta$ and injectivity on overlaps, the family $(p_i)$ agrees on all
nonempty $i<j$ overlaps, and by Lemma~5.2A and the amalgamation of
$\Psem$ it glues to a unique $p\in\Psem(W)$. Finally, $\beta_W(p)$
and $q$ have equal restrictions to each chart, so the separatedness
of $\PE$ gives $\beta_W(p)=q$. Thus
\[
\mathrm{Nonempty}\,\PE(W)\iff[\rE]=0
\]
follows.

\subsection{Scope of the theorem}\label{sec:57}

This comparison theorem is an additive theorem for
abelian-group-valued presheaves. The pointed-set-valued $H^1$ of
nonabelian torsors, higher coherence via 2-cocycles and gerbes, and
stack descent each require independent statements. None of these
conclusions is derived from the additive $H^1$ comparison.

Moreover, the $H^1$ treated by this theorem is the \v{C}ech
$\check H^1(\cU,-)$ relative to the selected monomorphic AAT cover $\cU$.
Identification with cover-independent sheaf cohomology requires the
additional conditions of refinement invariance or Leray-type
acyclicity, and this paper does not claim it.

Furthermore, the Lean formalization of this theorem is carried out
over a site whose context category is thin (the finite-meet poset
model of \S\ref{sec:33} is the representative example). Over general
$\ArchCtx(X)$, the gluing argument of \S\ref{sec:56}, including
Lemma~5.2A, uses monomorphisms and the universal property of
pullbacks, but that version has not been formalized (the details of
the hypotheses that thinness renders unnecessary are in
Appendix~\ref{app:A4}).

\subsection{Finite witness: independently generated circle comparison}\label{sec:58}

\begin{example}\label{ex:circle}
Take a monomorphic 4-cycle cover with four charts and four nonempty
overlaps $U_{01}, U_{12}, U_{23}, U_{03}$, and no nondegenerate triple
overlap.

\paragraph{Comparison of coefficients.}
On each nonempty intersection $V$, take the semantic support to be a
single generator $\sigma_V$ and the local repair relation to be
$2\sigma_V=0$ (writing $\Zint[\sigma_V]$ for the free abelian group on
the generator $\sigma_V$); the semantic side independently generates
\[
\Msem(V)=\Zint[\sigma_V]/(2\sigma_V)\cong\Ftwo.
\]
On the equation side, $\ObsE(V)=\Zint$ and $\IOb(V)=(2)$ give
$\QE(V)=\Zint/(2)$. For the interpretation $\chi_V(\sigma_V)=[1]$,
the three conditions of \S\ref{sec:53} can be checked directly:
\begin{center}
\small
\begin{tabular}{ll}
soundness: & $2\sigma_V \mapsto [2] = 0$\\
completeness: & $\ker(\Zint \to \Zint/(2)) = 2\Zint$\\
generation: & $[1]$ generates $\Zint/(2)$\\
\end{tabular}
\end{center}
Hence the construction of \S\ref{sec:53} gives $\Msem\simeq \QE$.
This is not a transport of one complex but a presentation comparison
between the semantic presentation $\Zint[\sigma_V]/(2\sigma_V)$ and the
equation quotient $\Zint/(2)$.

\paragraph{Local state system and affine transitions.}
On each chart and each nonempty overlap put $\Psem=\Ftwo$, with
$\Msem(V)\cong\Ftwo$ acting by addition. For an oriented edge $e$
(source chart $\to$ target chart), take the two restrictions to the
overlap to be
\[
\rho^{\mathrm{src}}_e(x)=x,
\qquad
\rho^{\mathrm{tgt}}_e(x)=x+t_e,
\]
where the transitions are
\[
(t_{01},t_{12},t_{23},t_{30})=(1,0,0,0).
\]
The last edge is $U_{03}$ oriented $3\to0$; in $\Ftwo$ the sign under
orientation reversal does not change anything. Since there is no
triple overlap, no additional functoriality condition constrains
these restrictions.

\paragraph{Computation of the residual.}
On each chart choose the local state $p_i=0$. On the overlap of an
edge $e$, the same $0$ is restricted from the source chart and from
the target chart, but the two restrictions are different maps. By the
definition of the residual in \S\ref{sec:42} (the difference of the
two restrictions on the overlap),
\[
r_{\mathrm{sem},e}
=\rho^{\mathrm{tgt}}_e(0)-\rho^{\mathrm{src}}_e(0)
=t_e.
\]
Even with all local states equal to $0$, the residual does not
vanish: what carries the mismatch is not the values of the states but
the difference of the two restrictions carrying the same $0$ to the
overlap --- the affine transition. Therefore
\[
\rsem=(1,0,0,0)\in\Ftwo^4.
\]

\paragraph{The equation side and the independent construction of $\beta$.}
Put $\PE=\Ftwo$ and give the affine transitions on oriented edges
independently by the same numbers $(1,0,0,0)$. The local-state
interpretation $\beta$ is the identity map on the $\Ftwo$ coordinates
on each chart and each nonempty overlap. This definition mentions
neither $\Phi$ nor the repair relation. Equivariance on generators
is, under $\chi(\sigma)=[1]$, the direct $\Ftwo$ computation
$\beta(\sigma+x)=x+1=\chi(\sigma)+\beta(x)$, and commutation with
restriction is a componentwise computation from the fact that the
transitions on both sides are given by the same numbers. Thus the
data of assumptions~5--7 are assembled independently of $\Rrep$, and
the derivation of \S\ref{sec:53} yields soundness
$\widetilde\chi(2\sigma)=[2]=0$ --- agreeing with the soundness row
checked directly in the coefficient comparison above. Choosing
$e_i=0$ on each chart --- which is the aligned choice
$e_i=\beta(p_i)$ of \S\ref{sec:55} --- gives, at the cochain level,
\[
\rE=\kappa^1(\rsem)=(1,0,0,0).
\]

\paragraph{Cocycle property and non-coboundary property.}
Since there is no nondegenerate triple overlap, $C^2(\cU)=0$ and
$\delta^1\rsem=0$ hold automatically. On the other hand, for any
$a\in C^0(\cU)$ the components of $\delta^0a$ are the
$a_j|_{U_{ij}}-a_i|_{U_{ij}}$ on the overlaps, and taking the edge sum around the
4-cycle, each chart's component appears exactly twice, so in $\Ftwo$
the sum is zero. The edge sum of $\rsem$ is $1+0+0+0=1$, so $\rsem$
is not a $\delta^0$-image. Therefore
\[
[\rsem]\ne0,\qquad
[\rE]\ne0,\qquad
\kappa_*([\rsem])=[\rE].
\]

\paragraph{Relation to real code.}
This example is a finite witness that the SAGA comparison preserves
not only zero classes but also nonzero classes. What erects the
nonzero class is not the number of charts but the
\textbf{cycle-without-a-face mechanism}: odd parity on a closed loop
plus the absence of a face (triple overlap) filling that loop. The
one-cent obstruction of Section~\ref{sec:7} is an instance of this
mechanism appearing in a real architecture (\S\ref{sec:71}).
\end{example}

\section{Lean Formalization Status}\label{sec:6}

The Lean formalization records the machine-checked definitions,
theorems, witnesses, and proof chain of the SAGA mathematics. This
section reports the formalization status at release time at declaration
granularity. The primary evidence for the status consists of the Lean
sources fixed by the release tag; the kernel axiom audit, which checks
every registered declaration against the allowlist of standard Mathlib
axioms --- \code{propext}, \code{Classical.choice}, \code{Quot.sound};
and the focused check of the release CI (the organization of the
sources and the audit mechanism are in Appendix~\ref{app:B}). Every
declaration listed in the table of this section is registered in this
audit.

\subsection{Scope of the formalization}\label{sec:61}

The SAGA theorem chain is formalized by fixing the input structure of
the theorem of Section~\ref{sec:5} (inputs 1--8 of
Theorem~\ref{thm:central}) directly as Lean structures. Accordingly,
what each declaration proves is not the theorem of this paper as such,
but the conclusion over the selected inputs fixed as Lean structures
(the correspondence with the paper claims and the assumptions of each
row are fixed by the table of \S\ref{sec:62}). The building blocks are
as follows (the correspondence with source files is in
Appendix~\ref{app:B}):
\begin{itemize}
\item the monomorphic ordered cover, the intersection diagram, the
three-term \v{C}ech complex, and the cover-relative $H^1$;
\item the semantic repair presentation and the presheaf construction
of $\Msem$;
\item the affine semantic repair system, the selected atlases on both
sides, and the residuals;
\item the derivation of soundness, presentation exactness, the
coefficient isomorphism $\Phi$, and the finite counterexamples exhibiting
the failure of each of the three conditions;
\item the equation-side realization / production (the connection to
the $\QE$ generated by the equation system);
\item the cochain comparison $\kappa$, the $H^1$ isomorphism, and the
residual correspondence;
\item true sheaf descent, grounded global gluing, and the final
bundling of the central theorem;
\item the finite witnesses (the nonzero 4-cycle circle witness and the
zero-class descent witness);
\item the site realization over the finite-meet poset model of
\S\ref{sec:33} and the comparison bridge to the ordered \v{C}ech
complex of \S\ref{sec:41} (including the formalized counterpart of
Lemma~5.2A).
\end{itemize}

The docstrings of the Lean sources use labels from a numbering series
distinct from the theorem numbers of this paper (normalized here as
\code{X.Theorem 1.1} and so on). The correspondence between declarations and labels is also
fixed in Appendix~\ref{app:B}.

\subsection{Status table}\label{sec:62}

Every status in Table~\ref{tab:status} is \code{proved} (proved in Lean).

\begin{table}[htbp]
\centering
\scriptsize
\begin{tabular}{@{}p{0.27\textwidth}p{0.30\textwidth}p{0.06\textwidth}p{0.28\textwidth}@{}}
\toprule
Paper claim & Lean declaration & Status & Assumptions \\
\midrule
Conclusion bundle of Theorem~\ref{thm:central} (residual correspondence, zero/nonzero
equivalence, grounded gluing) &
\code{SagaEquationPacket.}\allowbreak\code{sagaCentralTheorem} & proved &
selected packet (\S\ref{sec:61}), the two completeness conditions
(input~4), \code{Fintype} of the cover index set. The gluing clause is
additionally conditional on the cover belonging to the topology and the
true sheaf condition \\
\addlinespace
Theorem~\ref{thm:central}(i): derivation of repair-relation soundness &
\code{PrimaryStateCorrespondence.}\allowbreak\code{relationSound\_}\allowbreak\code{of\_stateCorrespondence} &
proved & local-state correspondence (inputs 5--7) \\
\addlinespace
Theorem~\ref{thm:central}(i): coefficient isomorphism $\Phi$ &
\code{SagaEquationPacket.phiEquiv} (generic version:
\code{PrimaryCoefficientCorrespondence.}\allowbreak\code{phiEquiv}) & proved &
soundness and the two completeness conditions \\
\addlinespace
Theorem~\ref{thm:central}(i): counterexamples removing each of the three conditions &
\code{ExactnessFixtures.}\allowbreak\code{soundness\_failure} /
\code{completeness\_failure} / \code{generation\_failure} & proved &
none (each fixture is a finite counterexample stating the failure of
the targeted condition; simultaneous validity of the remaining two
conditions is not part of the statement) \\
\addlinespace
Theorem~\ref{thm:central}(ii): cochain commutation $\kappa\delta=\delta\kappa$ &
\code{kappa1\_delta0}, \code{kappa2\_delta1} & proved &
restriction-natural isomorphism of the coefficient families \\
\addlinespace
Theorem~\ref{thm:central}(ii): $H^1$ isomorphism $\kappa_*$ &
\code{SagaEquationPacket.}\allowbreak\code{kappaStarAddEquiv} ($\simeq_+$,
additive equivalence, on the packet; generic version:
\code{kappaH1AddEquiv}) & proved & the two
completeness conditions \\
\addlinespace
Theorem~\ref{thm:central}(ii): residual correspondence $\kappa_*([\rsem])=[\rE]$ &
\code{SagaEquationPacket.}\allowbreak\code{residual\_correspondence\_class};
aligned-atlas version \code{betaAligned\_residual} & proved & the two
completeness conditions \\
\addlinespace
Theorem~\ref{thm:gluing} = Theorem~\ref{thm:central}(iii): grounded global gluing &
\code{SagaEquationPacket.}\allowbreak\code{globalRepair\_nonempty\_iff}
($\Psem$-side equivalence); \code{sagaGroundedGluing} (integrated up to
the equation-side equivalence) & proved & true sheaf condition, cover
belonging to the topology; the equation-side equivalence additionally
requires the two completeness conditions \\
\addlinespace
Lemma~5.2A (ordered matching completion) &
\code{SiteStateData.matchingFamily\_iff} & proved & thin context
category (\S\ref{sec:57}, Appendix~\ref{app:A4}); subsingleton
assumption for states on the omitted pairs (the formalized counterpart
of empty-overlap normalization) \\
\addlinespace
Example~\ref{ex:circle} (4-cycle circle witness, transfer of the nonzero class) &
\code{CircleWitness.}\allowbreak\code{semanticResidualClass\_ne\_zero},
\code{circle\_nonzero\_class\_transfer} & proved & none (closed
verification on the concrete 4-cycle model) \\
\addlinespace
Zero-class witness (nonempty firing of Theorem~\ref{thm:gluing}) &
\code{DescentWitness.descentTrueSheaf},
\code{descent\_sagaGroundedGluing} & proved & none (closed verification
on a concrete model) \\
\bottomrule
\end{tabular}
\caption{Formalization status at declaration granularity.}
\label{tab:status}
\end{table}

All rows of the table are registered in the kernel axiom audit, and the
kernel axioms are limited to the standard Mathlib axioms. No row
contains \code{sorry} or additional axioms.

What remains with respect to the mathematics of this paper
(Sections~\ref{sec:3}--\ref{sec:5}) --- to be distinguished from the
research outlook of Section~\ref{sec:9} --- is the following.
\begin{itemize}
\item The gluing argument of \S\ref{sec:56} over a general (non-thin)
context category, in the form that explicitly consumes monomorphisms
and the universal property of pullbacks: \code{unported} (the paper
proof exists but has not been ported to Lean; \S\ref{sec:57}).
\item Identification with sheaf cohomology via refinement invariance /
Leray-type acyclicity, nonabelian $H^1$, gerbes, stack descent: not
claimed by this paper, hence not among the proof obligations
(\S\ref{sec:57}).
\item Transporting the measurement runs of Section~\ref{sec:7} into
Lean (generating instantiations of Theorem~\ref{thm:central} from
measured packets): not done. The finite instantiation of
Theorem~\ref{thm:central} is carried by the witness rows of this table
(Example~\ref{ex:circle}), and the packets of Section~\ref{sec:7} carry
the finite checks shown by the condition matrix of Appendix~\ref{app:C3}. The
division of labor between the two is stated in \S\ref{sec:75}.
\end{itemize}

The release snapshot of this section is the repository state at the
release tag \code{saga-paper-}\allowbreak\code{v1.0.0}. The full Lean build, including
the axiom audit, runs in continuous integration on the tagged commit;
the commit hash and the reference to that CI run are recorded in
\code{MANIFEST.json} of the deposit bundle, which shares the release
identity of this paper (Appendix~\ref{app:C4}).

\section{ArchSig: Executable SAGA Diagnosis}\label{sec:7}

ArchSig is a measurement system (a Rust command-line tool) that
computes grounding, derived residuals, boundary membership, comparison
of run pairs, and gate verdicts from two families of inputs:
observation (ArchMap) and law / equations (LawPolicy, law surfaces,
MeasurementProfile), together with a repair plan that declares only
the selected complex (\S\ref{sec:74}). This section first presents a full SAGA diagnosis
of a real microservice architecture as a single computation
(\S\ref{sec:71}--\S\ref{sec:73}), then fixes the essentials of the
input contract on which the diagnosis stands (\S\ref{sec:74}) and the
boundary of the claims (\S\ref{sec:75}). The audit surface --- the
per-artifact input contract and the definition of the computation, the
supply process for observation, the kinds of conditions behind the
conclusions, and the reproduction procedure --- is fixed in
Appendix~\ref{app:C}.

\subsection{A real-code case: the one-cent obstruction}\label{sec:71}

The subject of the case study is the open-source train reservation
system
\href{https://github.com/FudanSELab/train-ticket}{train-ticket}
(commit \code{313886e99bef}, 42 services), widely used as a
microservice benchmark.

On the real call triangle cancel--inside-payment--order of this
system, the three services follow three different conventions for the
refund amount. All three were confirmed in the actual sources and observed as
section values of the ArchMap:
{\sloppy
\begin{itemize}
\item \textbf{cancel}: \code{Double.parseDouble(order.getPrice())}\allowbreak\code{ * 0.8}
rounded by \code{DecimalFormat("0.00")} and rendered as a string
(floating point + rounding);
\item \textbf{inside-payment}: exact arithmetic on
\code{new BigDecimal(order.getPrice())};
\item \textbf{order}: pass-through storage in
\code{private String price}.
\end{itemize}
}

The monetary representations of the three services are all passed as
\code{String} at the implementation level: the types agree. What
differs is the \emph{semantic convention} --- the rounding, scale,
computation, and storage of the amount that the \code{String}
represents. Furthermore, in the source investigation by the operator
(the person executing the measurement process; in this paper, the
author), no site was found that simultaneously reconciles the amounts
of the three services. This investigative finding is reflected in the
diagnosis in the form of declaring no triple overlap in the selected
complex. The absence of a simultaneous reconciliation site is itself
treated as an assertion of the operator, not as a fact exhibited by the
observation artifacts (\S\ref{sec:72}, Appendix~\ref{app:C3}).

This configuration can be read as a 3-cycle instance of the same
\textbf{cycle-without-a-face mechanism} as Example~\ref{ex:circle}.
The complexes are not identical --- Example~\ref{ex:circle} is a
4-cycle, this case a 3-cycle --- but the mechanism erecting the
nonzero class is the same: odd parity on a closed loop (the collision
of three conventions) and the absence of a face (triple overlap) filling
the loop. The rounding remainder of the refund computation
$0.8\times\text{price}$ --- a \textbf{drift below one cent} --- is
booked in no chart. We call this case study the \textbf{one-cent
obstruction}.

\paragraph{Semantic trace.}
Table~\ref{tab:semantictrace} fixes the quantities compared by the
three edges and the implementation conventions.

\begin{sidewaystable}[htbp]
\centering
\scriptsize
\begin{tabular}{@{}p{0.10\linewidth}p{0.09\linewidth}p{0.17\linewidth}p{0.13\linewidth}p{0.08\linewidth}p{0.15\linewidth}p{0.16\linewidth}@{}}
\toprule
Edge & Compared quantity & Convention L & Convention R & Normalization
& Expected equation & Witness variable \\
\midrule
cancel--inside-payment & refund amount & the string obtained by
rounding $0.8\times\text{price}$ with \code{DecimalFormat(}\allowbreak\code{"0.00")} &
the exact arithmetic value via \code{BigDecimal} & currency value
(scale-2) & the refund amounts of both charts are determined as the
same currency value & \code{e\_cancel\_insidepay} \\
\addlinespace
inside-payment--order & reference amount of the refund & the exact
value of \code{BigDecimal(}\allowbreak\code{order.}\allowbreak\code{getPrice())} & the pass-through
recorded amount of \code{String price} & currency value (scale-2) &
the reference amount used by payment is determined under the same
convention as the recorded amount & \code{e\_insidepay\_order} \\
\addlinespace
cancel--order & application of the refund ratio & the rounded
$0.8\times\text{price}$ & the recorded amount \code{price} & currency
value (scale-2) & the refund amount is determined uniquely as 0.8
times the recorded amount & \code{e\_cancel\_order} \\
\bottomrule
\end{tabular}
\caption{Semantic trace of the one-cent obstruction: the quantities
compared by the three edges and the implementation conventions.}
\label{tab:semantictrace}
\end{sidewaystable}

What this table fixes is the semantic-grounds side; the residual
values themselves are derived by \code{analyze} from the comparison of
the observed section values of each edge (Appendix~\ref{app:C1}). All three
edges compare restrictions of one and the same semantic quantity ---
the refund amount of this order --- and the degree of freedom of the
mismatch lies along a single direction: under which rounding / scale
convention that quantity is determined as a currency value. The
witness bindings (Appendix~\ref{app:C1}) are the law-side declarations
choosing that direction edge by edge.

\paragraph{Finite witness.}
Separated from any estimate of frequency or total loss (which requires
runtime measurement; \S\ref{sec:75}), we exhibit a witness with one
fixed input price, as a normalization computation applying each
source-confirmed convention to the same refund quantity:
\begin{center}
\footnotesize
\begin{tabular}{@{}ll@{}}
original price: & 12.33\\
cancel convention: & $0.8 \times 12.33 = 9.864$\\
 & $\to$ \code{DecimalFormat("0.00")} $\to$ ``9.86''\\
inside-payment convention: & the same refund quantity by exact
arithmetic\\
 & $\to$ 9.864\\
normalized as currency values: & 9.86 (scale-2) vs.\ 9.864\\
nonzero remainder: & 0.004 (below one cent)\\
\end{tabular}
\end{center}
The multiplication on the cancel side is performed in binary floating
point, but for this input that does not affect the result of the
scale-2 rounding. The remainder violates no chart's local equation ---
cancel is faithful to cancel's rounding convention, inside-payment to
exact arithmetic --- yet it remains as the difference of the values of
the two charts.

This numerical witness is a source-level recomputation by the
operator; none of the price, the $0.8$, the rounding, or the $0.004$
appears in ArchSig's computation.

\subsection{The diagnosis staircase}\label{sec:72}

The measurement inputs are composed as follows (the per-artifact input contract and
the definition of the computation are in Appendix~\ref{app:C1}).
\begin{itemize}
\item \textbf{cover}: 6 charts --- the diagnosis triangle
\{cancel, inside-payment, order\} and the consignment-fee region
\{preserve, consign, consign-price\}. The latter is an observation
region outside the triangle; it provides, within the same packet, a
contrast containing an agreeing edge and a mismatch that after repair
falls inside the boundary ($B^1$).
\item \textbf{law surfaces}: three --- closed-equational,
SAGA-grounded, and descent. The descent surface binds witness
variables on all six observed edges.
\item \textbf{repair plan}: declares only the selected complex. The
charts are exactly the 6 charts of the observed cover; the overlaps
are the 6 observed restriction edges (3 of the triangle +
consign--consign-price + preserve--consign + preserve--order); no
triple overlap is declared (reflecting the investigative finding of
\S\ref{sec:71}; its status as an assertion is discussed in
Appendix~\ref{app:C3}).
\item \textbf{repaired variant}: a hypothetical-repair ArchMap in
which the 3 charts of the triangle are replaced by a unified
BigDecimal scale-2 \code{HALF\_EVEN} convention.
\end{itemize}

The derived residuals of the pre-repair run (\emph{head}) show mismatching section values on the 3
triangle edges and the 2 preserve-family edges, and agreement on
consign--consign-price; the selected complex forms a single connected
component. The odd parity around the triangle is not solvable by
$\delta^0$, and the residual stands outside the boundary.

The results of the diagnosis staircase are:

\begin{center}
\small
\begin{tabular}{@{}p{0.30\linewidth}p{0.62\linewidth}@{}}
\toprule
Stage & Result \\
\midrule
head analyze & \code{MEASURED\_NONGLUING\_RESIDUAL}
(\code{run:78c31d6a3172}) \\
\quad grounding & \code{measured\_zero} --- each chart satisfies its
own local equations \\
\quad residual derivation & mismatch on the 3 triangle edges + 2
preserve edges; consign--consign-price agrees (all derived from
observation) \\
\quad boundary membership & \code{measured\_nonzero}
(\code{inB1: false}; with no triple declared, class vocabulary is not
unlocked --- made explicit by a named boundary statement) \\
gate head & \code{BLOCKED\_BY\_GATE\_POLICY} \\
repaired analyze & \code{REPAIR\_GLUES\_WITHIN\_SELECTED\_COMPLEX}
(\code{run:6685bab8db21}; the remaining preserve residual is inside
$B^1$) \\
compare head$\to$repaired &
\code{MEASURED\_OBSTRUCTION\_NO\_LONGER\_RECORDED\_AFTER\_CHANGE}. The
reading of the run pair is that the difference of the residuals of the
two runs is not solvable by $\delta^0$, consistent with the change
from head outside the boundary to repaired inside the boundary
(Appendix~\ref{app:C1}) \\
gate repaired & \code{PASS\_WITHIN\_GATE\_POLICY} \\
\bottomrule
\end{tabular}
\end{center}

The complexes and verdicts before and after repair are contrasted in
Figure~\ref{fig:onecent}.

\begin{figure}[htbp]
\centering
\includegraphics[width=\linewidth]{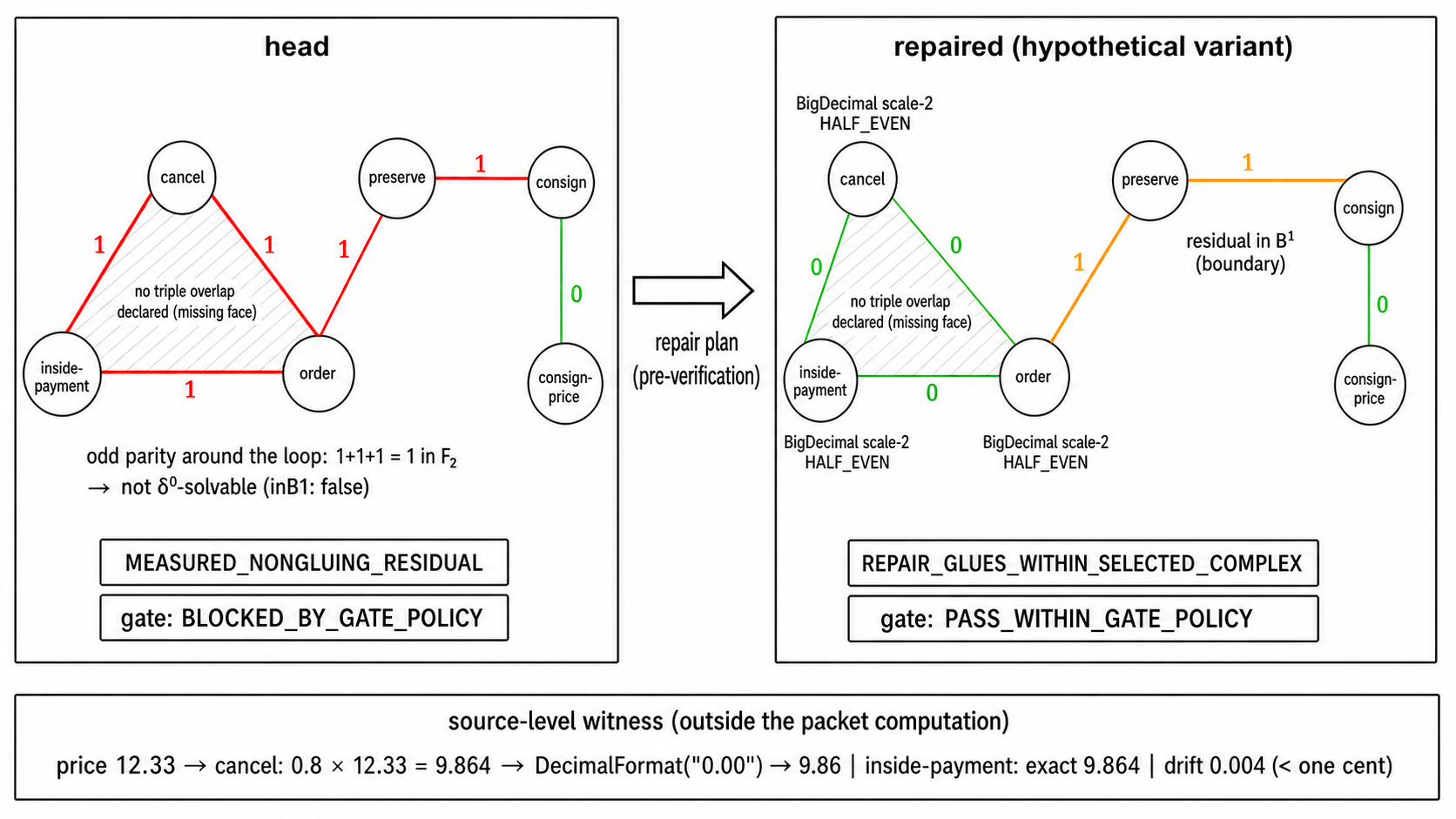}
\caption{The diagnosis staircase of the one-cent obstruction. In head
(left), derived residuals stand on the 3 triangle edges and the 2
preserve-family edges, and the odd parity on the closed loop of the
triangle is not solvable by $\delta^0$ (\code{inB1: false}). No triple
overlap is declared; the absence of the face is shown hatched. In the
repaired variant (right), replacing the 3 charts of the triangle by a
unified BigDecimal scale-2 \code{HALF\_EVEN} convention removes the
triangle residuals, and the remaining preserve-family residual falls
inside $B^1$. The numerical witness in the lower band is a
source-level recomputation by the operator and does not appear in the
packet computation (\S\ref{sec:71}).}
\label{fig:onecent}
\end{figure}

\subsection{What the measurement showed}\label{sec:73}

\paragraph{Reality of locally consistent, globally non-gluing.}
The \code{measured\_zero} of grounding says, as measurement, that each
chart satisfies its own local equations (in the sense of the
displayed-equation check of Appendix~\ref{app:C1}). Cancel is faithful to
cancel's rounding convention, inside-payment to exact arithmetic,
order to pass-through storage. Each pairwise handoff holds as well.
What was observed as measurement is that the odd parity of the
convention mismatch is not solvable by $\delta^0$ on the closed loop
(\code{inB1: false}). Under the assertion of the absence of a face
(triple) filling the loop (\S\ref{sec:71}), this measurement becomes
an instance --- not artificially planted, but found in real OSS --- of
the central SAGA structure: locally consistent, globally non-gluing.

\paragraph{Every step of the staircase works on derived residuals alone.}
From the measurement of non-boundary residuals derived from
observation (ArchMap) and law / equations (law surfaces,
MeasurementProfile), through blocking by the gate, prior verification
of the repair plan, the recording by compare of the disappearance of
the obstruction and of the residual difference of the run pair, to
gate PASS --- the full circle was walked without supplying any
authored residuals, certificates, or comparison data. The only operator
declarations that remain are the selected complex (\S\ref{sec:74}),
the witness bindings (law surface), and the repaired variant.

\paragraph{Effectiveness of the mathematical discipline.}
Declaring the drift-bearing triangle itself as a triple is rejected by
the cocycle condition --- a mathematically legitimate rejection.
Mismatches without witness bindings and references without grounds
were rejected fail-closed. The discipline held under the load of real data.

\paragraph{Executed silence.}
The measurement axis requiring runtime measured values (axis name
\code{harmonic-debt}) was not supplied in the absence of actual
measurements and was treated as silence. Likewise, not emitting class
vocabulary on a complex without triple declarations is the same
discipline in action; ArchSig recorded that boundary as a named
boundary statement, minimally, near the conclusion.

\subsection{Essentials of the input contract}\label{sec:74}

The inputs of ArchSig are the observation (ArchMap), the declarations
on the law / equation side (LawPolicy, law surfaces,
MeasurementProfile), and a repair plan that declares only the selected
complex. The per-artifact input contract and the definitions of the
computations --- grounding, residual derivation, boundary membership,
\code{compare}, \code{gate} --- are fixed in Appendix~\ref{app:C1}.

Residuals are not inputs. Under this contract, the measurement claim
of this paper stands on the following. \textbf{The conclusions of the
SAGA diagnosis --- residuals, boundary membership, gate verdicts ---
are derived deterministically from the Atom observation and the
selected equation system. What the operator can write are choices: the
scope of observation, the selected complex, the witness bindings; no
input carrying conclusions exists in this contract.} This separation
is enforced by fail-closed checks: charts outside the selected cover,
overlaps without observed restrictions, duplicated overlap
declarations, unobserved sections, and mismatches without witness
bindings all stop the computation without generating conclusions.
Therefore, while obstructions can be silenced by selection
(enumeration completeness is disclosed as an assumption in
Appendix~\ref{app:C3}), no nonzero residual can be erected on an edge
where the observations agree.

What this computation executes is a finite fragment of the complex
vocabulary of Section~\ref{sec:4}: residual derivation on the selected
1-skeleton, $B^1$ membership, and the residual difference of a run
pair reduce to finite \code{F2} linear algebra. The finite
instantiation of the comparison of Theorem~\ref{thm:central} ($\chi$,
$\Phi$, $\kappa$, presentation exactness) is not within the scope of
this computation; it is carried by the Lean witnesses of
Section~\ref{sec:6} (\S\ref{sec:75}).

On the supply side, providing an ArchMap involves semantic reading of
how the sources are used --- a probabilistic process that cannot be
replaced by a deterministic computation. This work does not hide that
stage: it fixes it as a procedure executed by an AI agent (the
authoring SKILL), chaining the probabilistic semantic reading and the
deterministic measurement in series. The discipline of the process and
the two layers of reproducibility are fixed in Appendix~\ref{app:C2}.
The kinds of conditions behind the conclusions (\code{computed} /
\code{checked} / \code{assumed} / \code{unmeasured}) are fixed row by
row by the condition matrix of Appendix~\ref{app:C3}, and the
reproduction procedure by Appendix~\ref{app:C4}.

\subsection{Boundary of the claims}\label{sec:75}

The claims of this case study are limited to the following scope. The
detection of the convention mismatch itself was carried by the stage
of the closed-equational surface. What the SAGA stage added is: the
descent reading of the same observation as boundary membership on the
selected 1-skeleton; the explicit limitation of the grounding reading
(\S\ref{sec:73}); the prior verification of the repair plan; the
reading of the residual difference of the run pair (Appendix~\ref{app:C1});
and the consistent diagnosis of the gate. We do not claim that ``SAGA
discovered a new failure''.

What is authored is selection. The selected complex (repair plan), the
witness bindings (law surface), and the repaired variant are
declarations written by the operator, and none of them carries
residual values. Class vocabulary is not unlocked: in this packet,
where no triple is declared on the component containing the triangle,
ArchSig's reading stays at boundary membership (Appendix~\ref{app:C3}). The
repaired variant is a hypothetical-repair ArchMap with rewritten
sections, and what \code{PASS\_WITHIN\_GATE\_POLICY} shows is the
mechanism of prior verification --- ``with this repair proposal, the
system would glue''. The frequency of the drift and its monetary
magnitude require runtime measurement and are not measured in this
paper. Broad benchmark evaluation and general detection performance
are matters for separate empirical studies.

The finite instantiation of Theorem~\ref{thm:central} is outside the
scope of this case study. Its bearer is the Lean witness of
Section~\ref{sec:6} (the circle witness of Example~\ref{ex:circle});
what the packet of this section carries are the finite checks shown by
the condition matrix of Appendix~\ref{app:C3}. No correspondence transporting
the measurement run into Lean is required.

We also make explicit the coverage of the SKILL-based supply process. The authoring SKILL of Appendix~\ref{app:C2} covers the
supply of the ArchMap, and an authoring SKILL of the same kind is in
place for the selected complex (repair plan). The law surfaces and the
repaired variant were supplied, at the time of this experiment, as
builder scripts and authored declarations. The builder scripts and
supply findings of this experiment are recorded as design material for
that step.

\section{Related Work}\label{sec:8}

We situate the main result of SAGA within four groups:
(1) cohomological program analysis; (2) sheaves for architecture and
systems engineering; (3) architecture conformance and formal
connection; (4) mechanized theorem chains and executable measurement.

\subsection{Cohomological program analysis}\label{sec:81}

Young unifies type checking, bug finding, and program equivalence as
\v{C}ech-cohomological analysis of semantic presheaves over observation
sites of Python programs \citep{young2026sheaf}. In its use of local
semantic observation and \v{C}ech $H^1$, and in reporting a Lean
formalization and an executable analyzer, it is the contemporaneous
work closest to SAGA.

The two share the following structure:
\begin{center}
\small
local semantic observations $\to$ site / cover $\to$ presheaf or
coefficient system $\to$ \v{C}ech complex $\to$ $H^1$ obstruction
$\to$ executable analysis
\end{center}

The differences lie in the mathematical center and the level of
abstraction. Young computes cohomology directly on the
program-semantic presheaf of Python. AAT starts from
language-independent Atoms and an equation system, relating facts
spanning different languages, services, and storage representations on
a single architecture object. SAGA constructs semantic-repair
cohomology and equation-generated AAT \v{C}ech cohomology
\emph{separately} and proves a comparison theorem identifying their
$H^1$. The difference is made concrete by the one-cent case: the
amounts in question are all represented as \code{string}, and the
implementation types agree. What SAGA measures is not a difference of
type names but whether locally valid semantic conventions glue into
one global architecture.

\begin{table}[htbp]
\centering
\scriptsize
\begin{tabular}{@{}p{0.20\textwidth}p{0.34\textwidth}p{0.38\textwidth}@{}}
\toprule
Axis & Young 2026 & SAGA \\
\midrule
primary object & Python programs and observation sites &
language-independent Atom families and architecture objects \\
coefficient & semantic presheaf, $\Ftwo$ realization & the
$\QE=\ObsE/\IOb$ generated by the equation system \\
central theorem & program-analysis claims via \v{C}ech cohomology &
the comparison $H^1_{\mathrm{sem}} \cong \check H^1(\cU,\QE)$ \\
repair reading & rank = independent fixes & residual classes,
boundary, repair gluing \\
empirical unit & 375 program-analysis benchmarks & a real microservice
architecture with before/after-repair comparison \\
domain-specific construction & direct computation on the
program-semantic presheaf & two independent presentations (semantic
repair / equation quotient), the construction of $\chi/\Phi/\kappa$,
exactness conditions relativized to architecture data \\
discharged obligation & connecting cohomology computations to
program-analysis claims & derivation of soundness, checking of the
$\ker$/$\operatorname{im}$ conditions, proof of the semantic
correspondence of residual classes \\
\bottomrule
\end{tabular}
\caption{Comparison of Young 2026 and SAGA.}
\label{tab:young}
\end{table}

What SAGA claims as the novelty of the comparison is the last two rows
of Table~\ref{tab:young} --- the domain-specific construction and the proof
obligations discharged there --- not the general mechanism by which a
coefficient isomorphism induces a cochain isomorphism (the attribution
of what is proper to SAGA is in \S\ref{sec:51}).

Young's Lean formalization (1{,}259 lines), the 375 benchmarks, and
the evaluation numbers are cited as results reported by that paper.
Since no public link to the code artifact of that paper's analyzer
could be confirmed, source-level comparison is outside the scope of
this paper.

In the lineage of global-section obstructions, we place
sheaf-theoretic contextuality \citep{abramsky2011sheaf} and its
non-vanishing obstructions in \v{C}ech cohomology
\citep{abramsky2012cohomology} as foundational references. As prior
examples of computable obstructions, we cite the application of
\v{C}ech cohomology to CSP and structure isomorphism
\citep{oconghaile2022cohomology}, and the task sheaf for distributed
task solvability \citep{felber2025sheaf}.

\subsection{Sheaves for architecture and systems engineering}\label{sec:82}

Gibson gives a Lean-verified sheaf model for multi-view consistency in
model-based systems engineering, characterizing global designs by
compatible local designs on pairwise interfaces
\citep{gibson2026sheaves}. SAGA develops the complementary
obstruction-theoretic direction: constructing the $H^1$ class that
remains when local data do not glue, and proving the comparison
between its semantic realization and its equation-generated
realization.

As the historical starting point we place the sheaf semantics of
objects and interaction \citep{goguen1992sheaf}. As foundational
references for cellular sheaves and finite computation we cite
\citet{curry2014sheaves}, \citet{robinson2017sheaves}, and
\citet{hansen2019spectral}.

\subsection{Architecture conformance and formal connection}\label{sec:83}

The architectural-mismatch analysis of \citet{garlan1995architectural}
examined the mismatch of assumptions embedded in components,
connectors, and the construction process. SAGA moves this kind of
mismatch into the mathematics of local equations and global gluing,
treating the residue as an $H^1$ class. Reflexion models \citep{murphy1995reflexion}, which
compute convergence / divergence between a high-level model and a
source model, are the representative starting point of conformance.
Wright \citep{allen1997formal} formalized the compatibility of
connector protocols in CSP. What SAGA treats is not protocol
conformance but the cohomology class of whether multiple local
semantic repairs glue across a cover. Compared with the survey of the
expressive power of ADLs \citep{medvidovic2000classification} and the
systematic surveys of architecture erosion
\citep{desilva2012controlling, li2022understanding}, SAGA offers a
global obstruction class for selected local data and a repair
comparison.

\subsection{Mechanized theorem chains and executable measurement}\label{sec:84}

Lean~4 \citep{demoura2021lean} and Mathlib \citep{mathlib2020lean} are
cited as the provenance of the formalization environment. The
formalization contribution of SAGA is not the use of Lean itself, but
the construction, as a single machine-checked theorem chain reaching
the conclusion bundle of the central theorem, of the semantic repair
presentation, the cover-relative \v{C}ech complex, the coefficient
isomorphism and cochain comparison, true sheaf descent, and the finite
witnesses of the zero and nonzero cases (Section~\ref{sec:6}). The
comparison with the formalizations of Young and Gibson is made in
terms of the scope of the theorem chain and the connection to
executable measurement.

\subsection{Synthesis}\label{sec:85}

Prior work has established sheaves as a language for global
consistency, cohomology as computable obstruction, and formal
architecture models as a basis for conformance analysis. To these
lineages SAGA adds the construction of an equation-generated AAT
\v{C}ech complex for software architecture and the proof that its
first cohomology agrees with an independently defined semantic-repair
obstruction. Lean verifies the comparison, and ArchSig evaluates its
finite architectural instances.

\section{Discussion and Research Outlook}\label{sec:9}

The SAGA comparison theorem is the first theorem about the
local-to-global capability of AAT to reach a full proof, and it is
also the first traversal of a larger research program. This section
discusses what the theorem has settled (\S\ref{sec:91}) and the
research directions it opens (\S\ref{sec:92}--\S\ref{sec:96}).
Everything from \S\ref{sec:92} onward is research outlook, not to be
confused with proved results; the development of theorems,
implementations, and tooling in each direction is outside the scope of
this paper.

\subsection{What SAGA has settled}\label{sec:91}

The first consequence of the comparison theorem is
\textbf{translatability}. A diagnosis posed in the language of
semantic repair and a computation posed in the language of equation
geometry agree at the level of $H^1$ and residual classes. Subsequent
theory development can proceed on either side and be translated to the
other: structure discovered on the semantic side can be computed with
geometric tools, and theorems proved on the geometric side can be read
in the language of repair.

The second consequence is the \textbf{grounding of counterfactuals}.
The sentence ``without this fix, the whole would not have glued'' used
to be a counterfactual with no place to stand within the vocabulary of
testing and observation, because a failure that did not occur cannot
be observed. The three-way equivalence of Theorem~\ref{thm:gluing}
gives this sentence a mathematical status. Since the zero class is
equivalent to the existence of a global repair, ``it glued because the
correction killed the class'' is a consequence, not a heuristic. The
before/after-repair comparison of Section~\ref{sec:7} is the first
working diagnostic example of how this consequence operates in an
engineering process (the division of labor between the scope of the
measurement checks and the Lean witnesses of Section~\ref{sec:6} is in
\S\ref{sec:75}).

\subsection{The Rising Sea as method}\label{sec:92}

The method of this research program follows what Grothendieck called
\emph{la mer qui monte} --- the rising sea \citep{grothendieck2022recoltes}.
Instead of prying open a hard problem with ad hoc tools, one raises
the water level of abstraction of the theory until the problem
dissolves naturally at the higher level.

The engineering problem ``locally correct but globally broken'' has,
at the level where one works directly with code and modules, remained
a target attacked with the local tools of testing, contracts, and CI.
Because the problem itself lives in the overlaps of parts, local
tools cannot reach it in principle. Rising to the level of Atoms,
sites, sheaves, covers, and cohomology, the same problem becomes an
ordinary object with coordinates: an end-to-end failure is a class in
$H^1$; classes can be computed; and what can be computed becomes the
work of tools.

This raising of the water level has two distinctive features. First,
\textbf{abstraction is portability}. Because Atoms are placed at a
level independent of language and framework, neither the theory nor
the measurement depends on a particular technology stack, and a
multi-language system can be treated as a single architecture
geometry. Second, \textbf{the water level is maintained by Lean}.
Grothendieck's sea settled in thousands of pages of prose; the sea of
this program is maintained as a tower of machine-verified theorems,
and as long as the theorems are formally verified, the water does not
recede. That the name SAGA is an homage to Serre's GAGA
\citep{serre1956gaga} --- the comparison theorem that established the
correspondence of two geometries --- reflects this configuration.

\subsection{The next mathematical peaks}\label{sec:93}

From the SAGA mathematics proved in this paper, the following theory
developments can be formulated naturally.
\begin{itemize}
\item \textbf{Characterization of the descent condition}: the
three-way equivalence of Theorem~\ref{thm:gluing} holds under the
selected true sheaf condition. The next peak is to characterize the
validity of this condition itself from architecture data and lower it
to checkable assumptions. Once realized, ``this architecture satisfies
the descent assumptions, so passing local verification implies global
consistency'' becomes a checkable measurement item, and the theory can
point to the places where the assumptions break as exactly the places
where integration verification should concentrate.
\item \textbf{Higher coherence}: developing as independent statements
what this paper deliberately separated (\S\ref{sec:57}): the
pointed-set-valued $H^1$ of nonabelian torsors, 2-cocycles and gerbes,
obstructions in $H^2$ and beyond, stack descent.
\item \textbf{Architecture schemes and morphism theory}: treating the
zero locus of the required equations as a scheme, and proving theorems
about morphisms, base change, and fibers between architectures.
\item \textbf{Moduli of repairs and derived deformation}: treating the
space of repairs itself as a geometric object, speaking of
deformations, degenerations, and singularities of repairs.
\item \textbf{First-order conormal specialization}: the specialization
of the conormal geometry given by the first-order approximation of the
obstruction ideal.
\item \textbf{Deepening the Atom foundations}: refinement of the Atom
axiom system and extension of the correspondence theory between
semantic atoms and equation Atoms.
\end{itemize}

\subsection{From statics to dynamics: Software Field Theory}\label{sec:94}

AAT is a theory of statics: it treats the consistency of an architecture at a
point in time. Software Field Theory (SFT) is the research program
that constructs the dynamics of software evolution on top of that
geometry: organizing development traces as a site, treating changes,
branches, and merges in geometric language, and formulating review,
CI, and operational feedback as forces acting on the development
system.

The central viewpoint of SFT is a change in the role of architecture:
an architecture is not only the present shape of the code but the
\textbf{shape of the reachable futures}. A good architecture is a
field configuration in which desirable futures are reachable and
undesirable futures are hard to reach. Under this view, SFT aims at
central propositions of the following form:

\begin{quote}
A software architecture is modular exactly when its future evolution
satisfies descent.
\end{quote}

A merge in version control is a gluing of local changes, and SFT reads
it as descent theory. The static comparison proved by SAGA is the
ground this dynamics stands on at each instant. Where the static
$H^1$ measures ``does not glue now'', SFT aims to measure ``can this
sequence of changes reach a future that glues''.

\subsection{The future of measurement: a fixed point in the age of generation}\label{sec:95}

As research outlook, we also record one configuration on the
measurement side.

In development where code is generated faster than humans can
understand it, the code, the tests, the reviews, and the fix proposals
all sway on the same probabilistic generative ground. Within that, a
deterministic measurement grounded in theorems becomes a point of a
different nature: its verdict is not someone's opinion but the
consequence, on a selected contract, of a machine-verified theorem.

In this configuration, ArchSig's design principle of ``staying silent
about what cannot be spoken'' works as a safety device. Agents read
outputs literally and act on them literally, so a checker that warns
without grounds makes the generative loop diverge. Only a checker that
speaks exactly what can be spoken from the given contract can be
placed as a gate of a generative system. Economically as well, the
computational cost of deterministic verification is small compared to
probabilistic generation, and the faster generation becomes, the
higher the value of computation that rejects mistakes before
execution.

Human work does not disappear in this configuration; it moves. Which
vocabulary is admitted, which equations are required, what counts as
PASS --- the selection of the LawPolicy remains on the human side as
the decision of what deserves protection --- indeed, it is refined into
precisely that. The diagnosis staircase of Section~\ref{sec:7} (nonzero residual
$\to$ BLOCKED $\to$ prior verification of the repair $\to$ PASS) is
the minimal working example of this configuration.

\subsection{The Rising Sea research program}\label{sec:96}

The research program as a whole is the following chain:
\begin{center}
\small
put architectural facts into words
$\to$ construct AAT geometry from Atoms and equation systems
$\to$ formalize the mathematical claims in Lean
$\to$ measure selected finite instances
$\to$ SFT connects the geometry to evolution dynamics
$\to$ compute trajectories, reachable futures, feedback, governance
\end{center}

SAGA is the first instance traversing the mathematics, formalization,
and measurement stages of this chain in a single theorem. The provenance discipline this paper has fixed (classifying
the kinds of claims and connecting each claim to primary evidence)
will be used in the same form on all subsequent peaks. When the water
level of the theory has risen high enough, users see only the surface.
An engineer, having learned neither categories nor sheaves nor
cohomology, says ``analyze this'', receives only the word
\emph{obstruction}, fixes it, and moves on. That quietness is the goal
of this program.

\section{Conclusion}\label{sec:10}

This paper has presented three results.

First, the \textbf{proof of the SAGA comparison theorem}. From the
semantic repair presentation we constructed $\Msem$ and the semantic
\v{C}ech complex, and from the equation system $\QE$ and the geometric
\v{C}ech complex, independently; under SAGA presentation exactness we
proved $\Phi:\Msem\simeq \QE$, the cochain isomorphism $\kappa$, the
$H^1$ isomorphism, and the residual class correspondence
$\kappa_*([\rsem])=[\rE]$. Under the true sheaf condition, the zero
class is equivalent to the existence of an actual global repair, and a
global section is constructed from the correction via the corrected
matching family and sheaf amalgamation.

Second, the \textbf{state of the Lean formalization}. The conclusion
bundle of the central theorem (residual correspondence, zero/nonzero
equivalence, grounded gluing), each stage of the coefficient
isomorphism and the cochain comparison, and the finite witnesses of
the zero and nonzero cases are recorded as a machine-checked theorem
chain, with the correspondence to the theorems of this paper, the
assumptions, and the axiom status fixed at declaration granularity.

Third, the \textbf{one-cent realization}. On the refund triangle of a
real microservice system, the non-boundary residual erected by the
collision of three monetary conventions was derived and measured from
observation (on the selected complex, Appendix~\ref{app:C3}), and the full
circle --- blocking by the gate, prior verification of the repair
proposal, disappearance of the obstruction after repair, gate PASS ---
was walked as one reproducible computation. Each chart satisfied its
own local equations. The failure appeared only on traversing the loop,
as a drift below one cent belonging to no local part of the system.

The voyage that started from an abstract comparison theorem reached,
through machine verification, one cent in real code. This one cent is
the smallest concrete witness that the sum of local correctness does
not reach global correctness. For SAGA this landfall is a Cape of Good
Hope: the first cape where the theory breaks out into the real sea,
and the routes beyond --- architecture schemes, moduli, Software Field
Theory, the Rising Sea --- are opened by this comparison theorem.

\section*{Acknowledgments}

The construction of the theory, the Lean formalization, the tooling
implementation, and the writing of this paper were carried out under the
author's direction in collaboration with LLM agents (Claude, Codex). The
correctness of the formalized mathematical claims is fixed by machine
verification in Lean, and the probabilistic process of observation supply is fixed as
artifacts by the discipline of Appendix~\ref{app:C2}. The AI agents are not
authors of this paper; the author bears responsibility for all conclusions.

\bibliographystyle{plainnat}
\bibliography{zenodo_saga_references}

\appendix

\section{Attribution of Assumptions (Which Proof Consumes What)}\label{app:A}

This appendix tabulates where the proofs in the text consume inputs
1--8 of Theorem~\ref{thm:central}. The content of the claims is
governed by the corresponding sections of the text; this appendix is an
index of attribution.

\subsection{Comparison core (\S\ref{sec:53}--\S\ref{sec:55})}\label{app:A1}

\begin{itemize}
\item Consumed: the cover intersection diagram; the two coefficient
presheaves (assumptions 1--2); the generator map and the two
completeness conditions (assumptions 3--4); the local-state data
(assumptions 5--7, used in the derivation of soundness).
\item Not used: the global sheaf condition, the finite enumeration of
the cover, displayed equation fulfillment.
\end{itemize}

\subsection{Actual gluing (\S\ref{sec:56})}\label{app:A2}

\begin{itemize}
\item What Lemma~5.2A consumes is the monomorphism property of the
cover and, of assumption~8, the subsingleton clause for $\Psem(V)$
(the proof consumes only the pairwise-overlap part). The lemma is used
in the completion of the corrected family in the forward proof and in
the gluing step of the equation-side globalization.
\item In addition, actual gluing uses the true sheaf condition of
(iii) (the sheaf condition and membership in the topology). The
$\Psem(V)$ subsingleton clause is supplied as the same proposition as
condition~(4) of the true sheaf condition.
\end{itemize}

\subsection{Unconsumed clauses}\label{app:A3}

Of assumption~8, the subsingleton clause for $\PE(V)$ and the
coefficient-vanishing clause $\Msem(V)=\QE(V)=0$ are consumed by no
proof in this paper. The complexes run over nonempty intersections
only (\S\ref{sec:41}), and what the equation-side globalization (end
of \S\ref{sec:56}) uses for empty overlaps is also the
$\Psem$-side subsingleton clause (Lemma~5.2A). These unconsumed
clauses are retained in order to fix the input side (assumption~8) in
a form that includes the full identification with the ordered
\v{C}ech complex of \S\ref{sec:41} (outside the scope of this paper).

\subsection{Consumption in the Lean formalization (thin sites)}\label{app:A4}

The Lean formalization is carried out over a site whose context
category is thin (\S\ref{sec:57}). By thinness --- the agreement of
parallel morphisms --- every morphism in this model is automatically a
monomorphism, and the treatment of self-overlaps and reversed overlaps
in Lemma~5.2A is also settled automatically by the agreement of
parallel morphisms. Consequently the formalized counterpart of
Lemma~5.2A, \code{SiteStateData.matchingFamily\_iff}, consumes the
selected pairwise overlaps and their lift data, but does not consume
the monomorphism property as an explicit hypothesis (the commutativity and uniqueness
of pullbacks follow from thinness). The gluing argument over general
$\ArchCtx(X)$ is \code{unported} (Section~\ref{sec:6}).

\section{Lean Source Correspondence (Declarations, Labels, Source Files)}\label{app:B}

This appendix fixes the organization, within the release tag, of the
Lean sources that are the primary evidence for the status of
Section~\ref{sec:6}. The SAGA theorem chain is located under
\code{Formal/AG/SemanticRepair/Saga/}. The kernel axiom audit is
implemented by
\code{\#assert\_}\allowbreak\code{standard\_}\allowbreak\code{axioms\_only}
in \code{Formal/AG/}\allowbreak\code{AxiomAudit.lean} (an allowlist
check of all registered declarations), which the release CI executes as
\code{lake env lean} \code{Formal/AG/}\allowbreak\code{AxiomAudit.lean}.

The docstrings of the Lean sources use labels from a numbering series
distinct from this paper, originating in internal organization during
development (normalized here as \code{X.Theorem 1.1} and so on). The theorem
correspondence with this paper is given completely by the status table
of Section~\ref{sec:6}; the labels serve for cross-checking when
browsing the Lean sources. The correspondence with the declarations of
Section~\ref{sec:6} is given in Table~\ref{tab:leandecl} (source files
are relative to \code{Saga/}).

\begin{table}[htbp]
\centering
\scriptsize
\begin{tabular}{@{}p{0.46\textwidth}p{0.22\textwidth}p{0.24\textwidth}@{}}
\toprule
Lean declaration & Lean-side label & Source file \\
\midrule
\code{SagaEquationPacket.sagaCentralTheorem} & X.Theorem 1.1 &
\code{TrueSheafDescent.lean} \\
\addlinespace
\code{PrimaryStateCorrespondence.}\allowbreak\code{relationSound\_of\_stateCorrespondence} &
X.Lemma 6.2A & \code{Exactness.lean} \\
\addlinespace
\code{SagaEquationPacket.phiEquiv} (generic version:
\code{PrimaryCoefficientCorrespondence.phiEquiv}, X.Theorem 6.3 /
Corollary 6.7) & as listed & \code{EquationRealization.lean},
\code{Exactness.lean} \\
\addlinespace
\code{ExactnessFixtures.soundness\_failure} /
\code{completeness\_failure} / \code{generation\_failure} &
X.Example 6.6 & \code{Exactness.lean} \\
\addlinespace
\code{kappa1\_delta0}, \code{kappa2\_delta1} & X.Theorem 7.2 &
\code{KappaComparison.lean} \\
\addlinespace
\code{SagaEquationPacket.kappaStarAddEquiv} (generic version:
\code{kappaH1AddEquiv}, X.Theorem 7.4) & as listed &
\code{KappaComparison.lean} \\
\addlinespace
\code{SagaEquationPacket.residual\_correspondence\_class},
\code{betaAligned\_residual} & X.Theorem 7.5 &
\code{KappaComparison.lean} \\
\addlinespace
\code{SagaEquationPacket.globalRepair\_nonempty\_iff},
\code{sagaGroundedGluing} & X.Theorem 8.2 &
\code{TrueSheafDescent.lean} \\
\addlinespace
\code{SiteStateData.matchingFamily\_iff} & X.Lemma 2.1A &
\code{OrderedComparison.lean} \\
\addlinespace
\code{CircleWitness.semanticResidualClass\_ne\_zero},
\code{circle\_nonzero\_class\_transfer} & X.Example 10.2 / Appendix
B.9 / X.Theorem 7.6 & \code{CircleWitness.lean} \\
\addlinespace
\code{DescentWitness.descentTrueSheaf},
\code{descent\_sagaGroundedGluing} & X.Definition 8.1 / X.Theorem 8.2 &
\code{DescentWitness.lean} \\
\bottomrule
\end{tabular}
\caption{Correspondence between declarations, Lean-side labels, and
source files.}
\label{tab:leandecl}
\end{table}

Table~\ref{tab:leanfiles} gives the correspondence between the building
blocks of \S\ref{sec:61} and the source files.

\begin{table}[htbp]
\centering
\scriptsize
\begin{tabular}{@{}p{0.55\textwidth}p{0.36\textwidth}@{}}
\toprule
Building block (\S\ref{sec:61}) & Source file \\
\midrule
cover, intersection diagram, three-term \v{C}ech complex &
\code{Cover.lean}, \code{CechThreeTerm.lean} \\
semantic repair presentation and $\Msem$ & \code{Presentation.lean} \\
affine repair system, selected atlases, residuals &
\code{RepairTorsor.lean}, \code{EquationLift.lean} \\
soundness derivation, exactness, coefficient isomorphism, finite
counterexamples & \code{Exactness.lean} \\
equation-side realization / production &
\code{EquationRealization.lean}, \code{EquationProduction.lean} \\
cochain comparison $\kappa$, $H^1$ isomorphism, residual
correspondence & \code{KappaComparison.lean} \\
true sheaf descent, final bundling of the central theorem &
\code{TrueSheafDescent.lean} \\
finite witnesses of the zero and nonzero cases &
\code{CircleWitness.lean}, \code{DescentWitness.lean} \\
poset site realization, ordered \v{C}ech model comparison bridge &
\code{OrderedComparison.lean}, \code{PartIVBridge.lean} \\
\bottomrule
\end{tabular}
\caption{Correspondence between the building blocks of
\S\ref{sec:61} and source files.}
\label{tab:leanfiles}
\end{table}

\section{Measurement Contract and Audit Surface of the ArchSig Case Study}\label{app:C}

This appendix fixes the input contract, the definition of the
computation, the supply process for observation, the kinds of
conditions behind the conclusions, and the reproduction procedure on
which the diagnosis of Section~\ref{sec:7} stands. The boundary of the
claims is stated in \S\ref{sec:75}.

\subsection{Input contract and computation}\label{app:C1}

This subsection fixes the input contract and the computation on
which the diagnosis of Section~\ref{sec:7} stands. The inputs of ArchSig are the following
artifacts.
\begin{itemize}
\item \textbf{ArchMap}: the Atom observation of the target system.
Records charts (local contexts), section values, and the
correspondence of overlaps.
\item \textbf{LawPolicy} (artifact name): fixes the Atom vocabulary in
use and the selected equation reading.
\item \textbf{law surface} (artifact name): the family of equation
surfaces used by the diagnosis (closed-equational, SAGA-grounded,
descent). The descent surface includes the declarations of witness
variables bound to mismatch edges.
\item \textbf{MeasurementProfile}: fixes the conditions of
measurement, including the coefficient (in this case study,
\code{F2}).
\item \textbf{repair plan}: declares only the selected complex
(charts, overlaps, optional triple overlaps, enumeration assertion)
and a reference to the target cover. It carries no residuals, no
coefficients, no comparison data.
\end{itemize}

The charts and overlaps of the repair plan are references into the
observation; declarations that do not resolve to an observed cover and
observed restrictions are not accepted. The remaining enumeration
assertion and the presence or absence of triple-overlap declarations
are author assertions: they accompany the conclusions as disclosure
rows of the assumption ledger, not as material from which conclusions
are derived (\S\ref{app:C3}). Thus the only thing the repair plan adds
of its own is disclosed assumptions.

ArchSig's \code{analyze} computes the following and emits them as a
measurement packet.
\begin{itemize}
\item Per-chart \textbf{grounding}: whether each chart passes the
displayed-equation check. The check targets the defect coordinates
selected by the law surface (the residuals of displayed equations, the
finite realization of the $\epsilon$ of \S\ref{sec:34}) and the
declared criteria --- not ``complete fulfillment of all equations on
the chart''.
\item Per-overlap \textbf{residual derivation}: derives an \code{F2}
value from the comparison of the section-value sets observed by the
two end charts of each edge of the selected cover, recording per edge
the provenance of the law surface's witness binding and the observed
atom references. A witness binding is a per-edge declaration
connecting a mismatch to a law-side violation coordinate (the finite
realization of the $\nu$ of \S\ref{sec:34}); it carries no instance
values. A mismatch without a binding has no law-side coordinate on
which the derived residual could stand, so it falls fail-closed to
non-computability.
\item \textbf{Boundary membership} on the selected 1-skeleton: the
finite \code{F2} computation of whether the derived residual belongs
to the $\delta^0$-image ($B^1$). The vocabulary of residual classes is
unlocked only when a triple overlap is declared on the connected
component where the residual stands and the cocycle check actually
runs. On components without triple declarations the selected $C^2$ is
zero and the cocycle condition holds automatically, so ArchSig emits
no class vocabulary and makes that boundary explicit in the packet as
a named boundary statement.
\end{itemize}

\code{compare} confronts the packets of two runs and records the
change in the obstruction together with the reading of the run pair.
The reading of the run pair is the finite \code{F2} decision of
whether the difference of the derived residuals of the two runs
belongs to $\operatorname{im}\delta^0$ on the selected $C^1$.
\code{gate} returns the verdict \code{PASS\_WITHIN\_GATE\_POLICY} or
\code{BLOCKED\_BY\_GATE\_POLICY} from a packet and a gate policy.

We fix the correspondence between the three law surfaces and the
outputs. The closed-equational surface declares the per-edge equations
of convention agreement and carries the detection of mismatches (the
\code{cech} stage). The SAGA-grounded surface declares the per-chart defect
coordinates and criteria and carries grounding. The descent surface
declares the witness bindings on mismatch edges and carries residual
derivation and boundary membership (the \code{saga-descent} stage). The
``stages'' mentioned by the condition matrix of \S\ref{app:C3} and its
note~1 refer to this correspondence.

\subsection{Supplying the inputs: authoring of observation and reproducibility}\label{app:C2}

Supplying an ArchMap involves semantic reading of how the sources are
used. This reading is a probabilistic process and cannot be replaced
by a deterministic computation. This work does not hide that stage; it
fixes it as a process. That is, the creation of the ArchMap is defined
as a fixed procedure executed by an AI agent --- hereafter the
\textbf{authoring SKILL} --- turning the probabilistic reading into a
traceable artifact.

The SKILL fixes the provenance of observation by the following
discipline.
\begin{itemize}
\item \textbf{Separation of the mechanical layer and the reading
layer}: the mechanical layer performs only file enumeration, content
hashing, literal comparison of normalized keys, and referential
integrity checks. Generation of Atoms, semantic choices, adoption or
rejection of candidates, and similarity-based merging are not
permitted in the mechanical layer; they remain as recorded judgments
of the reading layer.
\item \textbf{Approval record of scope}: the revision of the target
repository, the include / exclude globs, and the approved scope
manifest are fixed as artifacts. Observation claims are bounded by the
recorded revision and the selected scope; extraction of all evidence
is not claimed.
\item \textbf{Independent two-pass reading and adjudication}: the
standard mode (full-dual) reads the same worklist twice independently,
takes an extraction diff, and adjudicates discrepancies by re-reading
the sources. Adoption decisions are recorded with counts and grounds.
\item \textbf{Authoring audit}: the integrated ArchMap becomes a
measurement input only after passing mechanical checks (referential
integrity, coverage ledger, and so on).
\item \textbf{Run records and explicit models}: each authoring run
records the model used. The ArchMap of all 42 services underlying this
case study (2{,}118 atoms / 43 contexts / 440 sources) was created by
a run whose extraction and adjudication subagents were pinned to a
lightweight model. The fact that observation supply does not demand an
expensive frontier model has thus been measured as a practical
condition of this measurement system.
\end{itemize}

Under this process design, reproducibility splits into two layers.
ArchSig's computation is deterministic: the same input artifacts
return the same outputs (verified by digests). The generation of the
ArchMap is probabilistic, but since scope, grounds, procedure,
adjudication, and audit are fixed in artifacts, a third party can
re-execute the observation under the same discipline and confront the
results. Chaining the probabilistic semantic reading and the
deterministic measurement in series is the design principle of this
supply pipeline.

\subsection{Conditions and input kinds: the condition matrix}\label{app:C3}

Table~\ref{tab:conditionmatrix} fixes, for each conclusion of the
diagnosis staircase, under which kind of condition it stands. \code{computed}
denotes a finite computation from the inputs; \code{checked} a
condition checked by ArchSig against finite artifacts; \code{assumed}
a premise declared by the author or the profile and recorded by the
packet in the assumption ledger; \code{unmeasured} an axis not
supplied and kept silent. The head and repaired packets record the
same distinctions.

\begin{table}[htbp]
\centering
\scriptsize
\begin{tabular}{@{}p{0.20\textwidth}p{0.17\textwidth}p{0.13\textwidth}p{0.42\textwidth}@{}}
\toprule
Condition & Kind & Status & Record \\
\midrule
finite cover & property of finite artifacts & \code{checked} & site
cover digest computed and recorded from normalized contexts, covers,
and the derived nerve \\
\addlinespace
residual derivation & comparison of observed sections &
\code{computed} & per-edge \code{F2} values, witness bindings, and
provenance of observed atom references; in head, the 3 triangle edges
+ 2 preserve edges mismatch \\
\addlinespace
witness bindings & law surface declaration & \code{checked} & mismatch
edges require witness variable bindings; unbound mismatches fall
fail-closed to non-computability \\
\addlinespace
coefficient (\code{F2}) & law-side selection & \code{checked} &
declaration of the selected MeasurementProfile; the repair plan
carries no coefficients \\
\addlinespace
enumeration completeness of the selected complex & author assertion &
\code{assumed} & the repair plan's enumeration assertion recorded as
an assumption ledger row \\
\addlinespace
triple absence / class vocabulary & author assertion & \code{assumed}
(class vocabulary not unlocked) & the selected complex declares no
triples (\S\ref{sec:71}); the reading stays at 1-skeleton boundary
membership, and a named boundary statement makes the boundary explicit \\
\addlinespace
boundary membership & finite \code{F2} computation & \code{computed} &
head: \code{inB1: false}; repaired: \code{inB1: true} \\
\addlinespace
U-adequacy, Leray-type comparison & profile-supplied premises &
\code{assumed} & U-adequacy is the premise that the selected cover
suffices for the target reading; Leray-type comparison is the premise
needed to compare the cover-relative reading with cover-independent
sheaf cohomology. Both are disclosed as ledger rows, and the latter
comparison is not claimed (\S\ref{sec:57}) \\
\addlinespace
torsor property, fixedness of the action, coefficient descent &
profile-supplied premises & \code{assumed} & the three premises for
reading local sections as torsors of the coefficient (torsor property
of local sections, fixedness of the action, descent of the
coefficient); 3 ledger rows \\
\addlinespace
restriction surjectivity & profile-supplied premise & \code{assumed} &
the premise that restriction maps are surjective onto overlaps; ledger
row \\
\addlinespace
forest nerve & profile-supplied premise & \code{assumed} (not
satisfied in this packet) & the ledger records the forest premise, and
the nerve computation of the same packet shows 1 cycle as
\code{computed} (note~1) \\
\addlinespace
quotient sheaf condition & law surface declaration & \code{assumed} &
the law-surface-side declaration that the coefficient system playing
the role of the quotient coefficient satisfies the sheaf condition;
disclosed as a ledger row \\
\addlinespace
residual difference of the run pair (head$\leftrightarrow$repaired) &
derived reading on the run pair & \code{computed} & $\delta^0$
solvability of the difference of the derived residuals of the two runs
(\S\ref{app:C1}); for this pair the difference is not solvable by
$\delta^0$ \\
\addlinespace
repaired ArchMap & hypothetical repair input & supplied /
hypothetical & a hypothesis variant expressing the unified convention;
\code{PASS\_WITHIN\_GATE\_POLICY} does not indicate an implemented
repair \\
\addlinespace
runtime monetary magnitude & empirical measurement & \code{unmeasured}
& \code{harmonic-debt} not supplied; silence. Frequency and amounts
are not included in the conclusions \\
\bottomrule
\end{tabular}
\caption{Condition matrix: each conclusion of the diagnosis staircase
and the kind of condition under which it stands.}
\label{tab:conditionmatrix}
\end{table}

Note 1: the forest nerve is a disclosed unsatisfied premise. The
nonzero reading of the \code{saga-descent} stage in head (the $B^1$
membership decision) does not depend on this row. The verdict of the
\code{cech} stage separately declares its dependence on this assumption.

\subsection{Reproduction}\label{app:C4}

All input artifacts, primary outputs, builder scripts, and the
authoring SKILL itself are included in the reproduction bundle. The
schema versions of the artifacts (for the repair plan,
\code{archsig-repair-plan/v0.5.7}) are fixed by the bundle manifest.
The \code{inputDigests} of the primary outputs have been verified to
agree with the canonical digests. Reproduction consists of executing
\code{analyze} (head / repaired), \code{compare}, and \code{gate}
$\times2$ from the fixed ArchSig version and inputs, and confirming
agreement of the runIds and the gate verdicts.

\paragraph{Release identity.}
Release tag \code{saga-paper-v1.0.0}; version DOI of the v1.0.0 deposit
\href{https://doi.org/10.5281/zenodo.21605207}{10.5281/zenodo.21605207}
(concept DOI \href{https://doi.org/10.5281/zenodo.21603761}{10.5281/zenodo.21603761},
which resolves to the latest version); ArchSig tool version \code{0.5.4}; artifact schema versions
\code{archsig-repair-plan/v0.5.7} and
\code{archsig-run-manifest/v0.5.4}. The tagged commit hash and the
reference to the release CI run are recorded in \code{MANIFEST.json}.
This text is version 1.0.1 (textual corrections to version 1.0.0; the
release identity is unchanged).

\paragraph{Deposit layout.}
The deposit bundle \code{saga-zenodo-bundle/} contains:
\code{paper/} (this PDF, the \LaTeX{} / BibTeX sources, and the two
figures); \code{evidence/} (the input artifacts and the primary
outputs of the case study: ArchMap head / repaired variants, law
policy, law surfaces, measurement profile, repair plans, gate policy,
the builder script, and the \code{analyze} $\times2$ / \code{compare}
/ \code{gate} $\times2$ outputs); \code{report/} (the canonical
diagnosis report, including the condition-matrix source);
\code{reproduction/} (the execution commands, the expected outputs,
and the authoring SKILL); \code{audit/} (the claim-to-evidence
matrix); \code{MANIFEST.json} (SHA-256 checksums of every bundle file
together with the release identity); and \code{CITATION.md}.

\paragraph{Reproduction.}
Obtain the repository at the release tag
(\code{git clone --branch saga-}\allowbreak\code{paper-v1.0.0}) and run, from the
repository root with \code{\$EV} pointing at the bundle's
\code{evidence/} directory (byte-identical to the repository's own
copy, as certified by the checksums):
\code{analyze} on the head inputs, \code{analyze} on the repaired
inputs, \code{compare} on the two run directories, and \code{gate} on
each measurement packet with the gate policy --- five invocations of
the ArchSig command-line tool, listed verbatim in
\code{reproduction/README.md}. The expected outputs are: head
\code{analyze} concludes \code{MEASURED\_NONGLUING\_RESIDUAL} with
runId \code{run:78c31d6a3172}; repaired \code{analyze} concludes
\code{REPAIR\_GLUES\_WITHIN\_SELECTED\_COMPLEX} with runId
\code{run:6685bab8db21}; \code{compare} records
\code{MEASURED\_OBSTRUCTION\_NO\_LONGER\_RECORDED\_AFTER\_CHANGE};
\code{gate} returns \code{BLOCKED\_BY\_GATE\_POLICY} for head (with a
nonzero exit code) and \code{PASS\_WITHIN\_GATE\_POLICY} for
repaired. The \code{inputDigests} of the primary outputs use
location-independent stable references, so agreement is checked by
runId and verdict rather than by output directory.

\end{document}